\newcommand{\bal}[1]{
\begin{align} #1 \end{align}
}
\newcommand{\baln}[1]{
\begin{align*} #1 \end{align*}
}
\newcommand{\del}[3]{\left#1 #3 \right#2}
\newcommand{\avv}[1]{\del{<}{>}{#1}}
\newtheorem{lemma}{Lemma}
\def\cH{ {\cal H} }
\def\cN{ {\cal N} }
\def\cS{ {\cal S} }
\def\cO{ {\cal O} }
\def\complex{\mathbb{C}}
\def\im{\text{Im}}
\def\re{\text{Re}}
\def\epsilon{\varepsilon}
\def\polyn{\text{poly}(n)}
\newcommand{\be}{\begin{equation}}
\newcommand{\ee}{\end{equation}}
\newcommand{\lam}{\lambda}
\newcommand{\bigo}[1]{{\cal O}\left({#1}\right)}
\newcommand{\cref}[1]{(\ref{#1})}
    \title{Why Adiabatic Quantum Annealing is unlikely to yield speed-up}
\author{Aar\'on Villanueva\footnote{aaronv@science.ru.nl}, Peyman Najafi, Hilbert J. Kappen}
\affil{Radboud University, Nijmegen, The Netherlands}
\begin{document}

\maketitle
\begin{abstract}
We study quantum annealing for combinatorial optimization with Hamiltonian $H = H_0 + z H_f$ where $H_f$ is diagonal, $H_0=-\ket{\phi}\bra{\phi}$ is the equal superposition state projector and $z$ the annealing parameter.
We analytically compute the minimal spectral gap, which is $\bigo{1/\sqrt{N}}$ with $N$ the total number of states, and its location $z_*$.
We show that quantum speed-up requires an annealing schedule which demands a precise knowledge of $z_*$, which can be computed only if the density of states of the optimization problem is known.
However, in general the density of states is intractable to compute, making quadratic speed-up unfeasible for any practical combinatorial optimization problems. 
We conjecture that it is likely that this negative result also applies for any other instance independent transverse Hamiltonians such as $H_0 = -\sum_{i=1}^n \sigma_i^x$.
\end{abstract}

%====================================================================================================================

\section{Introduction}

Combinatorial optimization problems emerge in many real world scenarios and solving them efficiently is of great interest and importance in science and engineering \cite{gar79}.
Simulated Annealing (SA) is a well-known heuristic classical algorithm,  which is general purpose and easy to implement \cite{kir83}.
The ergodicity of the Markov chain (thermal fluctuations) combined with a sufficiently slow annealing schedule, ensures that SA converges asymptotically to an optimal solution of the optimization problem. 
In practice, the total time and the efficiency of SA is determined by the mixing time of the Markov chain, which diverges for hard optimization problems
\cite{haj88}.

The idea of quantum annealing (QA) \cite{brooke1999quantum,kadowaki1998quantum,farhi2000quantum} was introduced motivated by the conjecture that quantum fluctuations can be more efficient than classical thermal fluctuations
 to solve hard optimization problems \cite{apolloni1989quantum, apolloni1990numerical}.
 This would constitute in a quantum advantage over classical algorithms.
Today, a well known example of quantum advantage is unstructured (Grover) search of one out of $N$ items which can be done in $\cO(\sqrt{N})$ time using a quantum computer \cite{grover1997quantum}.
It was shown in \cite{roland2002quantum} that the same quadratic speed-up can be obtained using quantum annealing (QA).

NP-complete problems are defined by an universality property: the existence of an efficient algorithm to solve them would solve any other hard problem instantly.
The question of whether QA can be used to solve efficiently NP-complete problems is a difficult topic that has remained open.
An archetypal example is  3-satisfiability (3-SAT). 
Early applications of QA to 3-SAT indicated speed-up over classical algorithms~\cite{young2008size, farhi2001quantum, hogg2003adiabatic, farhi2000quantum, schutzhold2006adiabatic}. However, such studies were restricted either using a small number of qubits or considering only a subset of typical instances while ignoring worst cases \cite{young2010first}. 

The question was also studied using Simulated QA, which uses quantum Monte Carlo. Initially, a better scaling of Simulated QA relative to SA 
was reported in~\cite{martovnak2002quantum,santoro2002theory} to find the lowest energy configuration for the  2d random Ising model and in~\cite{battaglia2005optimization} for random 3-SAT. 
However, it was reported later that this advantage was because of time discretization (Trotterization) artifacts, and in the continuous time limit  there is no superiority for QA~\cite{heim2015quantum}. Around the same time, it was argued that QA is exponentially slow for solving NP-hard problems due to exponential closing of the spectral gap with increasing system size~\cite{jorg2008simple,altshuler2009adiabatic}.

One can build examples in which QA is exponentially faster than SA. For instance, the ``Hamming weight with a spike" problem~\cite{farhi2002quantum,kong2017performance}, which has a high thin barrier that makes SA get stuck in a local minimum. It is designed to make the SA algorithm fail while QA can tunnel through the spike and is effective.
Nevertheless, this problem is not NP-hard and can be efficiently solved by other classical algorithms~\cite{muthukrishnan2016tunneling, crosson2016simulated}. 

Further numerical studies show that adiabatic QA performs worse than SA on 3-SAT 
\cite{neuhaus2011classical} and the antiferromagnet spin model on a 3-regular graph \cite{liu2015quantum}. 
An experimental comparison of SA and QA  on the D-Wave \cite{BIAN2020104609,kowalsky20223} quantum annealing for random spin glass instances also reported no quantum speed-up~\cite{ronnow2014defining}.

Based on these negative results, efforts have been made to improve QA by instance specific initialization~\cite{perdomo2011study}, the use of a more informed choice of the initial Hamiltonian~\cite{farhi2008make}, the use  of an instance specific biasing $\sigma^z$ field~\cite{grass2019quantum}, other strategies which might
allow an escape from bottlenecks or trap states~\cite{amin2008effect},
 increase of the dimensionality of the Hamiltonian parametrization~\cite{Rezakhani2009} or change the adiabatic path by adding a catalyst Hamiltonian which vanishes at the edges of the annealing~\cite{Farhi2002a}.
Although these techniques exhibited cases of success in simple or ad-hoc problems, so far this has not yielded a demonstration of quantum speed for larger instances.

Due to the uncertainty surrounding adiabatic methods, in recent years there has been explored other strategies such as non-adiabatic or {\it diabatic} quantum computing \cite{yan2022analytical, crosson2014different, crosson2021prospects, shi2020efficient, muthukrishnan2016tunneling}.
In the diabatic case, transitions between the ground and excited states are allowed in regions where the gap becomes exponentially small, therefore avoiding the time complexity carried by adiabatic evolution.
An example is the glued-tree problem~\cite{childs2003exponential}, for which there exists a mixed adiabatic and diabatic evolution schedule that offers to date the only {\it provable}~\cite{ronnow2014defining} quantum (exponential) speedup over classical algorithms~\cite{albash2018adiabatic}.
It is worth noticing that a superpolynomial quantum speed-up was proven in \cite{hastings2021} for stoquastic adiabatic computation in the oracle setting.
More recently, following the same lines of \cite{hastings2021}, in \cite{vazirani2021sub} a sub-exponential quantum speed-up was proven in the stoquastic adiabatic framework for a problem in close relation with glued-trees.

While the removal of the adiabatic condition gives more freedom in building quantum algorithms and has proven advantage in some cases, 
the question whether diabatic methods can solve NP-hard problems in general is still unanswered.

Even in the adiabatic case, the question whether it can give quantum speedup for hard combinatorial optimization problems remains an open question to date~\cite{albash2018adiabatic}.
Apart from some specific simplified cases, such as Grover search \cite{roland2002quantum}, there is no convincing demonstration of quantum advantage using adiabatic methods for generic hard problems.

It was shown for instance that QA can not give better than quadratic speed-up for a very general model Hamiltonian with linear schedule~\cite{vznidarivc2006exponential,farhi2008make}.
Since this result is a lower bound on the time complexity, it leaves open the question of whether quadratic speed-up using QA
for generic combinatorial optimization problems
is possible at all.
In this paper we explore this question and respond positively.
We study a model Hamiltonian %Eq.~\cref{eq:model}
composed of a diagonal term plus the equal superposition state projector. We compute the minimal spectral gap and its location $z_*$ in the annealing interval.
We provide an analytical expression for the spectral gap, which is valid in a vicinity of the minimal value.
We  prove that the minimal gap is $\bigo{1/\sqrt{N}}$
\footnote{We make the convention of omitting non-exponential corrections in complexity estimations, for instance $\bigo{\text{poly}(n)}$ or $\bigo{\log{n}}$ factors.}
which happens at $z_* = Z_1$ with $Z_1$ a function of the density of states.
When the density of states is known, we can design a schedule that achieves $T = \bigo{\sqrt{N}}$, i.e. quadratic speedup over naive search for any combinatorial optimization problem.
This generalizes earlier results in the case of Grover search~\cite{roland2002quantum}.
Furthermore, we prove that this complexity is optimal, generalizing a previous result that holds in the linear regime~\cite{farhi2008make} to the case of arbitrary non-linear schedules.

However, the annealing schedule depends sensitively on the precise location of the gap $z_* = Z_1$.
In general the density of states is instance dependent and is intractable to compute.
Therefore $Z_1$ cannot be computed and we cannot build the optimized schedule.
This issue was explored in \cite{slutskii2019analog} in the context of adiabatic unstructured search, leading to similar conclusions about the sensibility of the annealing schedule.
We argue that our result is likely to hold also for the more commonly used mixing Hamiltonian $H_0 = -\sum_{i=1}^{n} \sigma_i^x$ and we support this claim with numerical results (see Sec.~\ref{sec:transverse}).

The paper is organized as follows: In Section \ref{sec:adiabQA} we give a short introduction on adiabatic quantum annealing and define pertinent concepts that will be used later.
In Section \ref{sec:model} we define the adiabatic model subject to study and describe its main properties included the spectral gap.
In Section \ref{sec:speedup} we take advantage of the precise knowledge about the gap to prove the existence of a schedule that guarantees quadratic speedup over naive search for any problem instance.
Furthermore, we prove that quadratic speedup is optimal, meaning there is no schedule that achieves better time complexity than $\bigo{\sqrt{N}}$.
In Section \ref{sec:transverse} we support our claims with numerical simulations using a transverse field Hamiltonian as the initial Hamiltonian.
In Section \ref{sec:conclusions} we summarize our results.

%====================================================================================================================

\section{Adiabatic QA: preliminary definitions}\label{sec:adiabQA}

In a standard adiabatic QA algorithm one specifies a time-dependent Hamiltonian $H(z)$ that evolves according to the Sch\"odinger equation.
The parameter $z = z(\beta)$ is the {\it annealing schedule} and is a function of the dimensionless variable $\beta:=t/T$, where $t$ is the time and $T$ the total evolution time.
In the range $\beta\in[0, 1]$ the path $H(z)$ connects two Hamiltonians, $H_0 := H(z_i)$ and $H_f := H(z_f)$, where $z_i = z(0)$ and $z_f = z(1)$.
The initial {\it mixing} Hamiltonian $H_0$ has a ground state that is easy to prepare.
The final {\it target} Hamiltonian $H_f$ encodes in its ground state the solution to the optimization problem.
The goal of QA is to transform the ground state of $H_0$ to a state which is close to the ground state of $H_f$ at time $t=T$ by evolving the Schr\"odinger equation.

We say that the evolution is adiabatic if it is carried out slowly.
The error induced by this adiabatic evolution is given by the adiabatic theorem (AT)~\cite{Born1928, Kato1950}.
The story of the progress towards a mathematically clean AT is full of twists and turns.
The reader can consult~\cite{albash2018adiabatic} for a review.
In this work we use a rigorous version of the AT due to Jansen, Ruskai and Seiler~\cite{Jansen2007} as it appears in~\cite{albash2018adiabatic}.
The AT gives a bound on the adiabatic error in terms of the gap and derivatives of the Hamiltonian.
It states that for achieving an error less than $\epsilon$ in the comparison between the evolved state and the desired final ground state, the total evolution time $T$ must satisfy
\bal{\label{eq:AT}
T\ge \frac{C}{\epsilon} \,,\quad C := 2\max_\beta \frac{\|\dot H(\beta)\|}{g(\beta)^2} + \int_0^1 \left( \frac{\|\ddot H \|}{g^2} + \frac{7\|\dot H \|^2}{g^3} \right) d\beta\,,
}
where $g$ denotes the spectral gap of $H$ and where the dot represents derivative with respect to $\beta$. The value of $C$ is a measure of the time complexity of the model, but it doesn't work as a general definition of the complexity/cost of the algorithm since it is not scale-invariant
(one can make the complexity arbitrarily small by changing the time scale).
A proper measure of a scale-invariant cost is $T \max_z \| H(z)\|$~\cite{albash2018adiabatic}, where the maximum is taken in the interval $[z_i, z_f]$.
In cases where the maximum norm of the Hamiltonian is $\bigo{\text{poly}(n)}$ with $n$ the number of qubits, as it is for the present model, we can simply take $\text{cost} = T$ (we ignore non-exponential corrections in our complexity estimations).
Hereby, we take $T$ as a well-defined measure of the cost of the model.

%====================================================================================================================

\section{An adiabatic algorithm for combinatorial optimization}\label{sec:model}

\subsection{The model}
Consider an arbitrary optimization problem of the form
$$
s^* = \text{argmin}_s E(s)
$$
with $s=(s_1,\ldots,s_n)$ a vector of $n$ binary variables and $E(s)$ is an integer in the range $0\le E(s)\le m$ with $m=\text{poly}(n)$. Define $N_E$ the number of states with energy $E$ as $N_E=\sum_s \delta_{E(s),E}$, subject to $\sum_{E=0}^m N_E=N$ and $N=2^n$.
An example is 3-satisfiability (3-SAT) with $m\propto n$ clauses where $E(s)=\sum_{a=1}^m e_a(s)$ and each $e_a(s)=0,1$ depends on $3$ spins.
In this work we assume that $N_0 > 0$, i.e. we have at least one satisfying assignment.

We encode the optimization problem in an adiabatic model defining a spin Hamiltonian
\footnote{$H$ is related to the usual form  for quantum annealing $H = (1 - A) H_0 + A H_f$ with $0\le A \le 1$ by defining $z=\frac{A}{1 - A}$.}
\bal{\label{eq:model}
H = H_0 + z H_f\
}
where $H_f = \sum_s E(s)\ket{s}\bra{s}$ is diagonal in the $\sigma^z$-basis of $n$ spins,
and $H_0 = -\ket{\phi}\bra{\phi}$ with $\ket{\phi} = \frac{1}{\sqrt{N}}\sum_s \ket{s}$ the equal superposition state.
The parameter $z$ is the annealing parameter that changes with time $t$ varied from $z=0$ to its final value $z_f$. 
For $z=0$, $H$ has ground state $\ket{\phi}$. 
For large $z$, the ground state of $H$ encodes the minimal energy solution. 

The spectral analysis of model \cref{eq:model} can be performed with some precision. For details consult Appendix \ref{app:model}.
Here we outline the main points of the proof and refer the reader to Appendix \ref{app:model} for details.
The problem Hamiltonian $H(z)$ has a special form that allows to specify the first and second eigenvalues needed to compute the gap as the first two roots of a characteristic non-linear equation.
Due to a permutation symmetry present in $H(z)$ we can prove that the gap is given by $g = \lam_1 - \lam_0$, where $\lam_0$ and $\lam_1$ are the first two roots of the characteristic equation and satisfy $\lam_0 < 0 < \lam_1$ when $z>0$.
When the gap becomes small, each eigenvalue becomes small and close to zero.
We can then perform a Taylor approximation of the characteristic equation and compute $\lam_0$ and $\lam_1$ with high precision for large $n$.
Then, given an instance with density of states $\{n_E\}$ where $n_E := \frac{N_E}{N}$ and, by assumption, $N_0>0$, we obtain the following approximate expression for the spectral gap:
\bal{\label{eq:gap_approx}
g(z) =\frac{z}{Z_2}\sqrt{\left(z - Z_1 \right)^2 +4 \frac{N_0}{N} Z_2}
}
where $Z_p := \sum_{E=1}^m \frac{n_E}{E^p}$ is a partition sum encoding information about the problem instance.

From \cref{eq:AT} we see that the time complexity $T$ is roughly dictated by inverse powers of the gap and is dominated
when the gap reaches its minimum value.
Eq.~\cref{eq:gap_approx} implies that the minimal gap occurs at $z_* := Z_1 + \bigo{n_0}$ and is approximately given by
\bal{\label{eq:min_gap}
g_* := \min_{z}g(z)=2Z_1\sqrt{\frac{N_0}{NZ_2}}
}
to leading relative order in $\frac{1}{N}$. See Appendix \ref{app:model} for details.
From the expression of the minimal gap $g_*$ we see that to know its exact location we need detailed knowledge about the problem instance in the form of a partition sum $Z_1$ (see Section \ref{sec:fluctuations}).
This characterizes the complexity of model \cref{eq:model} in term of the spectral gap and its minimum value.
 Eq.~\cref{eq:min_gap} shows that the gap closes to a smallest value $g_*$ that scales exponentially in $n$, at a point $z_*$ which depends on detailed information (the density of states $\{n_E\}$) about the problem instance.
Then, the position of $g_*$ in the annealing interval fluctuates instance by instance.
This poses a problem at the moment of designing the schedule for achieving speedup as we will see in the next section.

To complete the definition of the algorithm we need a stopping condition.
In adiabatic QA this is simply the final evolution time $T$ where $z = z_f$.
The annealing point $z_f$ can be specified by requiring that at the end of the evolution the probability of finding the ground state solution is high enough,
which can be encoded through a free parameter in the model.
We find that a sufficient stopping condition is $z_f := z_*+\Delta$ with $\Delta := \sqrt{\frac{4N_0 Z_2}{N\delta }}$ with $\delta\ll 1$ an arbitrary constant independent of $n$ (see Appendix~\ref{appendix:whentostop}).
This guarantees a high probability of finding the ground state solution at the final time.
It is sufficient to stop just outside the range of minimal gap to achieve a high probability ratio of finding the solution.

%====================================================================================================================

\subsection{Fluctuations in the minimal gap location}\label{sec:fluctuations}

Eq.~\cref{eq:gap_approx} explicitly gives the minimal gap and its location in terms of $Z_1$ and $Z_2$, which depend on the density of states $N_E$ and thus on the problem instance.
Note that the range of $z$ where the gap is small ($\cO(1/\sqrt{N})$) is very narrow and of width $\cO(1/\sqrt{N})$. 
Instance by instance fluctuations in the density of states $\delta n_E$   induces fluctuations of $\bigo{n^{-5/2}}$ in the minimal gap location  $z_*=Z_1$ (see Fig.~\ref{file16b} in Appendix~\ref{sec:3sat}), while the depth of the gap stays of  $\bigo{1/\sqrt{N}}$.
For random instances $n_E$ drawn from arbitrary distributions, the fluctuations in the minimal gap location $z_*$ is thus very large compared to the width of the minimal gap.
We illustrate this in Fig. \ref{fig:gap_fluct}, where we computed the gap $g(z)$ dependency with $z$ for 20 random 3-SAT instances with $n=100$ spins. 
\begin{figure}
\begin{center}
\includegraphics[width=0.49\textwidth]{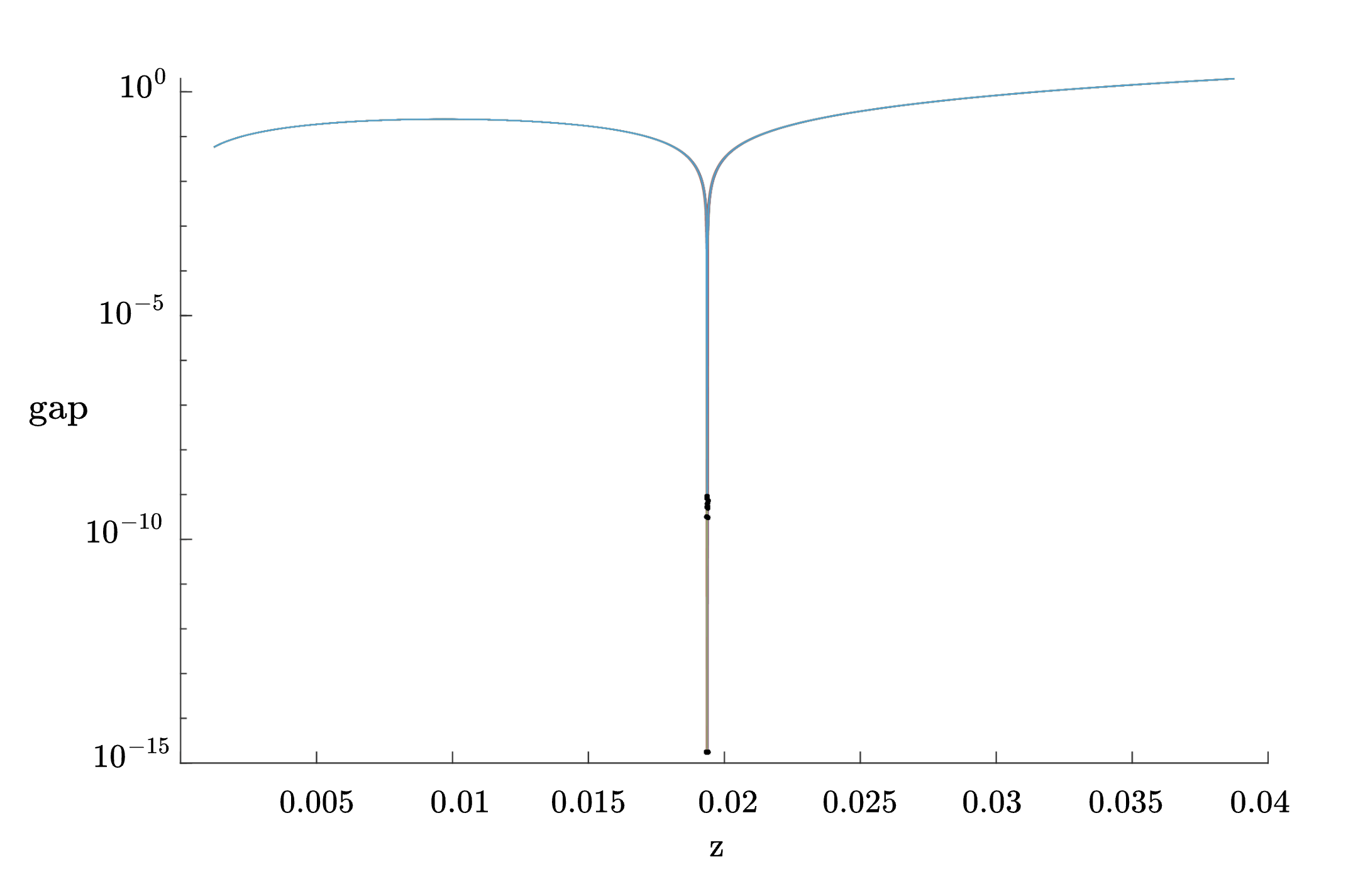}
\includegraphics[width=0.49\textwidth]{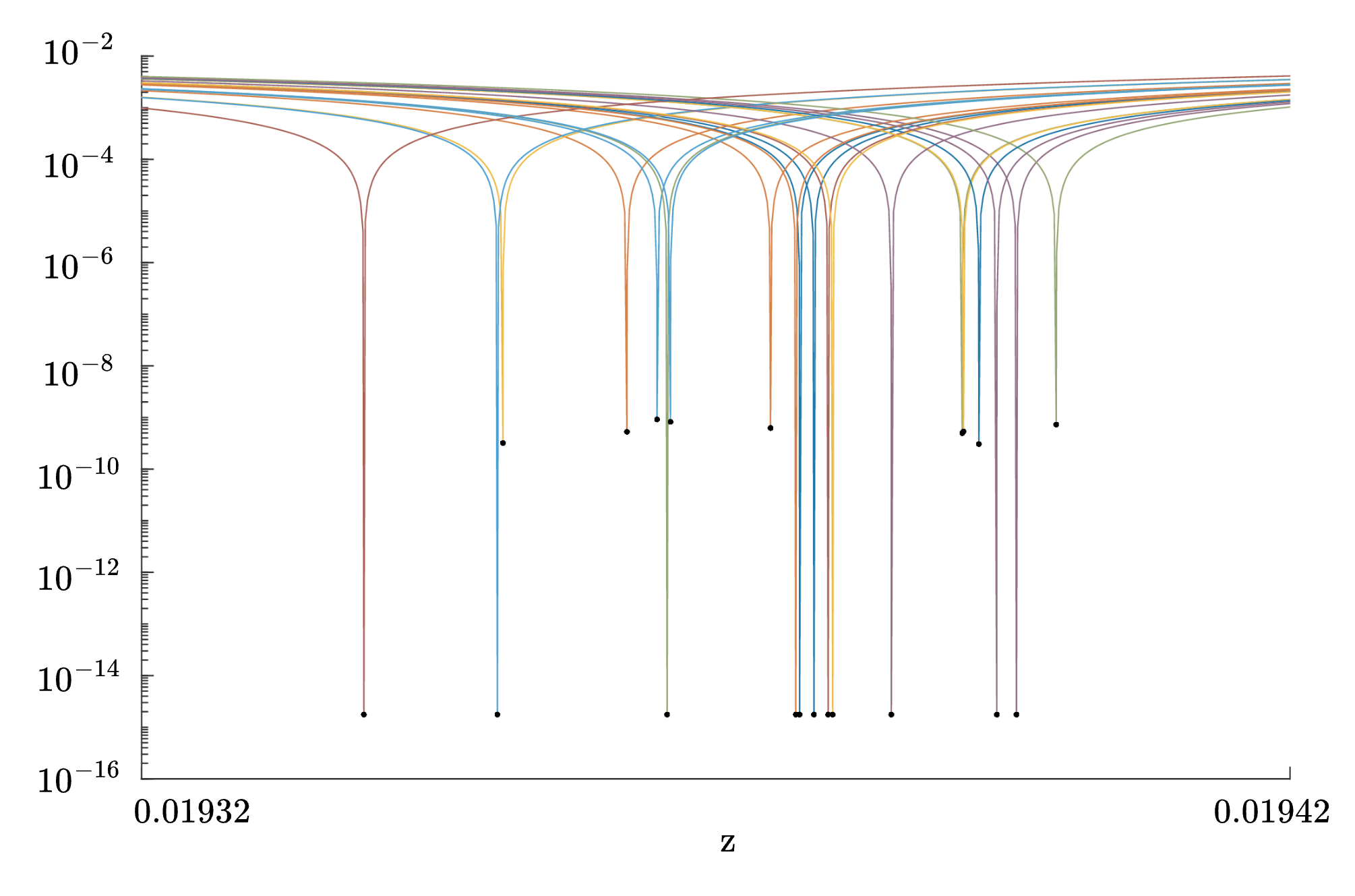}
\caption{{\bf Left}: Gap $g$ of~\cref{eq:gap_approx} versus $z$ for 20 random 3-SAT instances with density of states drawn from a Gaussian distribution $n_E\sim \cN(p_E,\Sigma)$ (see Appendix \ref{sec:3sat}) for $n=100$ spins.
The clause/spin ratio $\alpha := m/n=4.2$ close to  the
sat-unsat phase transition point~\cite{Mezard2002}.
{\bf Right}: Zoom-in on the minimal gap region.
The lower black dots with minimal gaps $\approx \num{e-15}$ correspond to instances with $N_0=1$ whereas the others with minimal gaps $\approx \num{e-9}$ correspond to $N_0>1$.
}
\label{fig:gap_fluct}
\end{center}
\end{figure}
The gap depends on the instance only through the density of states $n_E$.
For random 3-SAT, the mean density of states $\avv{n_E}$ is Binomial-distributed (see Eq.~\cref{binom} in Appendix \ref{sec:3sat}).
In the large $n$ limit, $n_E$ becomes Gaussian-distributed with covariance matrix given by Eq.~\cref{eq:nene} in Appendix \ref{sec:3sat}.

In order to obtain a quantum speedup, the annealing schedule (how $z$ varies with $\beta$) should be optimized such that annealing is fast when $g$ is large and (very) slow when $g$ is small. 
Due to the strong instance dependence of $g(z)$, the schedule is also instance dependent. This is possible, 
but requires the exact numerical value of $Z_1$, since that determines the location of the minimal gap. 
When $Z_1$ is known, or can be efficiently computed, one can construct an annealing schedule that computes the solution in $\bigo{\sqrt{N}}$ and thus achieves quadratic speed-up relative to a naive exhaustive search. An example is Grover search where $E(s)$ has only values 0, 1. The optimal schedule has $\dot{z}=\bigo{1/\sqrt{N}}$ in an exponentially small range of $\bigo{1/\sqrt{N}}$ around $z=Z_1$ and large $\dot{z}$ elsewhere.
For general optimization problems, however, $Z_1$ is intractable to compute \cite{MacKay2003} and an efficient algorithm cannot be designed.
The problem of computing the partition sum or, equivalently, the density of states, is a counting problem that is at least as hard as the corresponding NP problem~\cite{arora2009computational}.
An example is 3-SAT, which is NP-complete.
Computing the partition sum requires computing the density of states $\{n_E\}$.
In particular, it requires computing $n_0 = N_0/N$.
Computing the total number of solutions $N_0$ can be a harder task than computing the solutions themselves, since an efficient method to count solutions can be iteratively applied order $n$ times to find a particular solution in $\polyn$ runtime.

A possible way to avoid this negative conclusion, is to note that the cost function $0\le E(s)\le m$ can always be transformed to a new cost function $0\le E'(s)\le 1$ where $E'(s)=1$ iff $E(s)>0$. Then the density of state is known: $N_{E'=0}=N_0, N_{E'=1}=N-N_0$ in terms of the number of optimal solutions $N_0$. In terms of $E'$, the problem is equivalent to Grover's unstructured search, for which we already know that quadratic speed-up is possible \cite{roland2002quantum}.
But formulating the optimization problem in this way, yields a Hamiltonian $H_f$ that is no longer the sum of local terms, which may prevent an efficient circuit implementation \footnote{The Hamiltonian $H_0$ can be implemented on a quantum circuit by using an ancilla qubit with $\cO\left(\text{poly}(n)\right)$ controlled-NOT and single-qubit gates \cite{nielsen2010quantum}.}
Essentially, we have reduced the problem to Grover-like search at the expense of the locality of the target Hamiltonian.
As mentioned earlier, the problem of computing $N_0$ can be a challenging task for classical algorithms.
In the quantum circuit model, the original Grover search algorithm does not encounter the fine-tuning problem inherent to analog quantum computing.
However, it requires a precise knowledge of $N_0$ to determine when to stop the algorithm due to the souffl\'e problem~\cite{brassard1997}.
Gate-based algorithms exist to address this challenge and achieve quadratic speed-up for unknown $N_0$~\cite{Boyer1998, Chuang2014}.
In the continuous case, we can circumvent the need to compute the location of the minimal gap by recasting the problem in the form of the original adiabatic Grover algorithm introduced in~\cite{roland2002quantum}.
This algorithm features a fixed location of the minimal gap within the annealing interval and is independent of $n$.
Nevertheless, we still rely on knowing $N_0$ to design the schedule that achieves quadratic speed-up.
Even more, this algorithm already faces some precision requirements that can become computationally demanding with increasing number of qubits~\cite{albash2018adiabatic}.
Setting aside precision issues, it was proven in~\cite{Chuang2017} that the adiabatic Grover algorithm, together with a tailored schedule~\cite{roland2002quantum}, can achieve quadratic speed-up for estimates $N_0'$ such that $N_0' < N_0$, which translates into runtimes which are longer than necessary.

%====================================================================================================================

\subsection{Quadratic speedup for generic optimization problems}\label{sec:speedup}

From the AT \cref{eq:AT} we see that the time complexity of QA depends on the schedule $z(\beta)$.
An example is adiabatic Grover search where a naive schedule results in $T \sim N$ while an optimized version results in $T \sim \sqrt{N}$, the celebrated quadratic speedup~\cite{roland2002quantum}.
Here we aim to generalize this result to arbitrary optimization problems.
In this section we design a schedule that achieves quadratic speedup for the general case, from which Grover search is a particular case.
Later we prove that this speedup is optimal in the sense that this schedule minimizes $T$.

For the construction of the schedule we use a local-optimization ansatz that was first introduced in~\cite{roland2002quantum} (see also~\cite{Vazirani2001}) and further analyzed in \cite{Jansen2007}. Assume a schedule $z(\beta)$ that satisfies
\bal{\label{eq:ans}
\dot z = c g^2\
}
where $g = g(z)$ is the gap of model $H$ in \cref{eq:model} and $c$ is a normalization constant.
Note that Eq.~\cref{eq:ans} forces the schedule to slow down where the gap becomes small, which obviously requires knowing the position of its minimum.
Eq.~\cref{eq:ans} represents a boundary-value problem subject to conditions $z(0) = 0$ and $z(1) = z_f$.
Hence, $c$ is completely determined:
\bal{\label{eq:c}
c = \int_0^{z_f} \frac{dz}{g(z)^2}\,.
}
We use the AT stated in Eq.~\cref{eq:AT} to compute the time complexity of model $H$.
The complexity $C$ in \cref{eq:AT} can be bounded using the ansatz \cref{eq:ans}.
It can be shown that~\cite{albash2018adiabatic}
\bal{\label{eq:T}
C \le c \left( 2 E_{\text{max}} + 28 \int_0^{z_f} \frac{dz}{g(z)}\right)
}
with $c$ given by \cref{eq:c} and $E_{\text{max}} := \|H_f\| = \bigo{n}$.
Note that unlike \cref{eq:AT}, equation \cref{eq:T} depends on the schedule only through the gap,
meaning that we have gotten rid of schedule parametrization details and focus instead on the form of the gap along the adiabatic line $H(z)$.
We use this bound and the analytical approximation for the gap \cref{eq:gap_approx} to show the following.
\begin{lemma}
\label{lemma:quad_speedup}
Given the model $H = H_0 + z H_f$, there exist a schedule $z(\beta)$, solution to the boundary-value problem~\cref{eq:ans}, that achieves a time complexity
$$
T = \bigo{\sqrt{N/N_0}}\,.
$$
\end{lemma}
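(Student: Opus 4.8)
The plan is to plug the analytic gap approximation~\cref{eq:gap_approx} into the complexity bound~\cref{eq:T} and carry out the two integrals $c = \int_0^{z_f} g^{-2}\,dz$ and $\int_0^{z_f} g^{-1}\,dz$ explicitly, then use the stopping condition $z_f = z_* + \Delta$ with $\Delta = \sqrt{4N_0 Z_2/(N\delta)}$ to estimate them. Writing $g(z)^2 = \frac{z^2}{Z_2^2}\big((z-Z_1)^2 + 4\frac{N_0}{N}Z_2\big)$, I would first note that near the minimum the factor $z^2/Z_2^2 \approx Z_1^2/Z_2^2$ is essentially constant, so the $z$-dependence that matters is the Lorentzian-type factor $(z-Z_1)^2 + a^2$ with $a^2 := 4\frac{N_0}{N}Z_2$. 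The key integrals are then of the form $\int \frac{dz}{(z-Z_1)^2+a^2}$ and $\int\frac{dz}{((z-Z_1)^2+a^2)^{1/2}}$, giving an $\arctan$ and an $\mathrm{arcsinh}$ respectively; the first contributes $\cO(1/a) = \cO(\sqrt{N/N_0})$ after multiplying back the $Z_2^2/z^2 \approx Z_2^2/Z_1^2$ prefactor, and the second contributes only a $\log$ factor, which is subleading under the paper's convention of dropping $\mathrm{poly}(n)$ and $\log n$ corrections.

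Concretely, I expect $c = \int_0^{z_f} g^{-2}dz$ to be dominated by the region $z\approx Z_1$ of width $\cO(\Delta)$; away from this region $g = \cO(1)$ so the contribution is $\cO(z_f) = \cO(1)$. In the Lorentzian region, $c \approx \frac{Z_2^2}{Z_1^2}\int \frac{dz}{(z-Z_1)^2 + a^2} \approx \frac{Z_2^2}{Z_1^2}\cdot\frac{1}{a}\big(\arctan(\Delta/a) + \arctan(Z_1/a)\big) = \cO\!\big(\frac{Z_2^2}{Z_1^2 a}\big) = \cO\!\big(\frac{Z_2^{3/2}}{Z_1^2}\sqrt{N/N_0}\big)$. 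Since $Z_1, Z_2$ are $\cO(\mathrm{poly}(n))$ sums bounded by $m^{-1}\le Z_p$ and $Z_p\le 1$ (here one needs a lower bound on $Z_1$ away from $0$, which holds because $n_E/E \ge n_m/m$ for at least the top energy level, or more carefully because $N_0>0$ forces structure — this is where I would be careful), we get $c = \cO(\sqrt{N/N_0})$ up to polynomial factors. For the second integral, $\int_0^{z_f} g^{-1}dz \approx \frac{Z_2}{Z_1}\,\mathrm{arcsinh}(\Delta/a) + \cO(1)$; since $\Delta/a = \sqrt{1/\delta}$ is an $n$-independent constant, this integral is $\cO(1)$. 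Hence $C \le c(2E_{\max} + 28\cdot\cO(1)) = \cO(n)\cdot\cO(\sqrt{N/N_0}) = \cO(\sqrt{N/N_0})$ in the paper's convention, and since $\epsilon$ is a fixed constant and $\max_z\|H(z)\| = \cO(\mathrm{poly}(n))$, the cost $T = \cO(C/\epsilon) = \cO(\sqrt{N/N_0})$.

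The main obstacle I anticipate is controlling the prefactor $z^2/Z_2^2$ and the partition sums $Z_1, Z_2$ so that they only contribute polynomially: one must verify that $Z_1$ does not decay exponentially in $n$ (otherwise $1/Z_1^2$ could blow up the bound), and that the approximation~\cref{eq:gap_approx} — valid "in a vicinity of the minimal value" — is still good enough on the whole integration range $[0,z_f]$, or else split the integral into the near-minimum window where~\cref{eq:gap_approx} applies and an outer region where one instead uses a crude bound $g = \Omega(1)$. A secondary subtlety is justifying that the error term $\bigo{n_0}$ in $z_* = Z_1 + \bigo{n_0}$ and the leading-order nature of~\cref{eq:min_gap} do not spoil the integral estimates; since these corrections are smaller than $\Delta$, they shift the Lorentzian center and width only negligibly, but this should be stated. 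Once these bookkeeping points are handled, the result follows from elementary integration, and I would relegate the detailed estimates of $Z_1, Z_2$ and the approximation error to the appendix referenced earlier.
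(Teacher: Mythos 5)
Your proposal follows essentially the same route as the paper's proof in Appendix~\ref{app:existence_proof}: bound $C$ via~\cref{eq:T}, split the integration range into the window around $z_*=Z_1$ where the analytic gap formula~\cref{eq:gap_approx} holds and an outer region where the gap is non-exponential (Lemma~\ref{lemma:large_gap}), evaluate the $p=2$ and $p=1$ integrals by their $\arctan$ and logarithmic primitives to get $\bigo{\sqrt{N/N_0}}$ and a subleading factor respectively, and control the $Z_p$ prefactors polynomially. Two harmless quantitative slips: the $p=1$ integral is actually $\bigo{n}$ rather than $\cO(1)$ because the lower limit $z_0$ sits exponentially many widths $\sqrt{d}$ away from $Z_1$, and the outer-region gap is only guaranteed to be $\Omega(1/\text{poly}(n))$ rather than $\Omega(1)$ --- neither changes the conclusion under the paper's convention of dropping polynomial factors.
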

\begin{proof}
See Appendix \ref{app:existence_proof}.
\end{proof}

Lemma \ref{lemma:quad_speedup} states that for any optimization problem that can be encoded in $H$ there exists a schedule achieving quadratic speedup.
This was previously shown for Grover search \cite{roland2002quantum}, for which an analytical solution to \cref{eq:ans} is available.
Here we generalize this to the case of arbitrary target Hamiltonians that are diagonal in the computational basis.

Now we can ask whether this is optimal, or whether there exists a schedule that yields better than quadratic speedup, by optimizing the schedule using a different strategy than \cref{eq:ans}.
Previously, it was shown that quadratic speed-up is optimal for the Hamiltonian $H=(1-A)H_0+AH_f$ using a schedule $A(\beta)$ that is linear in $\beta$~\cite{farhi2008make}.
We generalize this result and prove that quadratic speedup is optimal for arbitrary schedules.
\begin{lemma}\label{lemma:quad_optimal}
Let $T$ the total evolution time of the adiabatic model $H = H_0 + zH_f$ and $P$ the projector onto the ground state subspace of $H_f$ with degeneracy $N_0$.
Define $p := \bra{\psi(T)} P \ket{\psi(T)} > 0$. Then
\bal{\label{eq:Tbound}
\sqrt{\frac{N}{N_0}} \gamma \le T\
}
with $\gamma := \frac{p}{2} \frac{\left( 1 - \sqrt{\frac{N_0}{Np}} \right)^2}{ 1 + \sqrt{p} }$ a constant of $\cO(1)$.
\end{lemma}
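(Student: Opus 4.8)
The plan is to control how much weight the evolved state can place in the ground-state subspace of $H_f$ and to show this weight moves slowly. Write $Q := I - P$ and $f(t) := \langle \psi(t)|P|\psi(t)\rangle$. The algorithm starts in the ground state of $H(0)=H_0=-\ket{\phi}\bra{\phi}$, so $\ket{\psi(0)}=\ket{\phi}$ and $f(0)=\|P\ket{\phi}\|^2 = N_0/N$, while $f(T)=p$ by hypothesis. The lemma reduces to the claim that $f$ cannot travel from $N_0/N$ to $p$ faster than a rate $\cO(\sqrt{N_0/N})$ per unit time, which then forces $T \gtrsim \sqrt{N/N_0}$.

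The crucial point — and the reason the argument holds for an \emph{arbitrary} schedule $z(\beta)$, not only the linear one of~\cite{farhi2008make} — is that the instance-dependent part of $H$ drops out of everything that matters: $H_f$ and $P$ are both diagonal in the computational basis, so $[H(t),P]=[H_0,P]$ for every $t$, independently of $z(t/T)$. A one-line computation gives $[H_0,P]=\ket{P\phi}\bra{\phi}-\ket{\phi}\bra{P\phi}$, a rank-two operator whose norm equals $\sqrt{\frac{N_0}{N}(1-\frac{N_0}{N})}\le\sqrt{N_0/N}$. From the Schr\"odinger equation $\dot f = i\langle\psi|[H_0,P]|\psi\rangle$, and bounding $|\langle\psi|P|\phi\rangle|\le\|P\ket{\psi}\|\,\|P\ket{\phi}\|=\sqrt f\,\sqrt{N_0/N}$ and $|\langle\phi|\psi\rangle|\le 1$, one gets $|\dot f|\le 2\sqrt{N_0/N}\,\sqrt f$, i.e. $\big|\tfrac{d}{dt}\sqrt f\big|\le\sqrt{N_0/N}$. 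Equivalently, one can compare $\ket{\psi(t)}$ with the evolution $\ket{\bar\psi(t)}$ generated by $QH(t)Q$ and started from $\ket{\phi}$: this reference keeps its $P$-weight pinned at $N_0/N$ for all $t$, while its generator differs from $H(t)$ only by $H_0-QH_0Q$, of norm $\cO(\sqrt{N_0/N})$, so that $\|\ket{\psi(T)}-\ket{\bar\psi(T)}\|\le T\,\|H_0-QH_0Q\|$ does the same job as the differential inequality.

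Integrating over $[0,T]$ gives $\big|\sqrt p-\sqrt{N_0/N}\big|\le\sqrt{N_0/N}\,T$, hence $T\ge\sqrt{N/N_0}\,\big|\sqrt p-\sqrt{N_0/N}\big| = \sqrt{N/N_0}\,\sqrt p\,\big(1-\sqrt{N_0/(Np)}\big)$, which already yields $T=\Omega(\sqrt{N/N_0})$ (here $p=\Theta(1)$ and $N_0/N$ is exponentially small). This bound is in fact stronger than the one claimed, so the stated $\gamma = \tfrac{p}{2}\tfrac{(1-\sqrt{N_0/(Np)})^2}{1+\sqrt p}$ follows a fortiori; one lands on exactly that expression by using instead the endpoint constraint in fidelity form, e.g. $|\langle\psi(T)|\bar\psi(T)\rangle|\le\sqrt{1-p}$ together with $\|\ket{\psi(T)}-\ket{\bar\psi(T)}\|^2\ge 2\big(1-|\langle\psi(T)|\bar\psi(T)\rangle|\big)$ and the triangle-inequality bound on the distance, and then simplifying. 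I expect no real obstacle: the only genuine work is (i) the operator-norm estimate $\|[H_0,P]\|\le\sqrt{N_0/N}$, which is the sole place the $\sqrt N$ enters, and (ii) tracking constants so the initial weight $N_0/N$ and the final weight $p$ assemble into a positive $\gamma$. The one point worth emphasizing in the write-up is the corollary of (i): since $z(\beta)$ never appears, the bound is automatically schedule-independent, which is exactly the desired strengthening of~\cite{farhi2008make}.
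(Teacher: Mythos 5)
Your proof is correct, and it reaches the same conclusion by a noticeably leaner route than the paper. Both arguments pivot on the identical key fact: since $H_f$ and $P$ commute (both diagonal, with $H_fP=0$ on the solution subspace), $[H(t),P]=[H_0,P]$ for every $t$ and every schedule, and this commutator has norm $\cO(\sqrt{N_0/N})$ --- that is where the $\sqrt{N}$ enters and why the bound is schedule-independent. The paper, however, packages this into an error functional $E(t)=\|\,\ket{\omega_T}-\ket{\omega_t}\|^2$ built from the normalized projected state $\ket{\omega_t}=p^{-1/2}P\ket{\psi_t}$, which requires first invoking the permutation symmetry of $H$ to show $P\ket{\psi_t}\propto\ket{a_0}$, and then tracking the phase $\theta_T$ and the boundary values $E(0),E(T)$; the $1/p$ and $1/\sqrt{p}$ factors in the paper's $\gamma$ come from that normalization. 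You instead run the differential inequality directly on $f(t)=\bra{\psi}P\ket{\psi}$, getting $|\dot f|\le 2\sqrt{N_0/N}\,\sqrt{f}$ by Cauchy--Schwarz alone, hence $|\tfrac{d}{dt}\sqrt{f}|\le\sqrt{N_0/N}$ and $T\ge\sqrt{N/N_0}\,\sqrt{p}\,\bigl(1-\sqrt{N_0/(Np)}\bigr)$. This needs no symmetry argument, no auxiliary reference state, and yields a constant that is strictly stronger than the paper's $\gamma$ (one checks $\sqrt{p}(1-x)\ge\tfrac{p}{2}\tfrac{(1-x)^2}{1+\sqrt{p}}$ for $x=\sqrt{N_0/(Np)}\in[0,1)$), so the lemma follows a fortiori; your speculative fidelity-form detour to recover the exact $\gamma$ is unnecessary. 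The only point worth a sentence in a careful write-up is the standard comparison argument justifying the integration of $|\dot f|\le 2\sqrt{N_0/N}\sqrt{f}$ through possible zeros of $f$ (harmless here since $f(0)=N_0/N>0$).
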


\begin{proof}
See Appendix \ref{sec:optimal_proof}.
\end{proof}

Note that Lemma~\ref{lemma:quad_optimal} states that 
the optimal time complexity scales as $T \sim \frac{1}{g_*}$, see (\ref{eq:min_gap}), while a naive 
estimate based on the adiabatic theorem predicts a scaling $T \sim \frac{1}{g_*^2}$~\cite{Vazirani2001}.
Applied to our case, this would yield $T \sim N/N_0$, i.e. no speedup at all.
Also, note that this lower bound on the annealing time applies regardless we have detailed knowledge of the problem instance or if the quantum annealing is performed non adiabatically.

The time complexity in \cref{eq:Tbound} depends on the number of solutions $N_0$. This number is constant and independent of $n$ for Grover search \cite{roland2002quantum} for which $N_0$ is fixed. In this case the complexity is $\cO(\sqrt{N})$.
The same complexity is obtained for the case of {\it unique satisfying assignments} in random 3-SAT~\cite{Hen2011} where one considers instances with only one solution.
For random 3-SAT, the number of solutions $N_0$ is a random variable that has mean value $\braket{N_0} = N p_0$, with $p_0 = \left(\frac{7}{8}\right)^{\alpha n}$ and $\alpha = m/n$ a constant (see Appendix \ref{sec:3sat}).
Then $\braket{N_0} \sim N^\gamma$ with $\gamma = 1 - \alpha |\log{\frac{7}{8}}|$.
Knowing that $g_*$ goes like $\sqrt{N_0/N}$ we can estimate the mean value of the minimal gap as
$
\braket{g_*} \sim N^{-\frac{\alpha}{2k}}
$
where $k := 1/|\log{\frac{7}{8}}| \approx 5.19$.
This value of $k$ was the first reported as an upper bound on the phase transition point $\alpha_c$ for random 3-SAT~\cite{Franco1983, Chvatal1988}.
Today, this value can be accurately estimated and is known to be around $\alpha_c = 4.26$ \cite{Mezard2002, braunstein2002survey, mertens2006threshold}.
For $\alpha < k$ the mean number of solutions $\braket{N_0}$ grows exponentially fast with $n$, while for $\alpha > k$ it decreases with the same rate.
Using the optimal schedule, the relation $T \sim 1/g_*$ implies $T \sim N^{\frac{\alpha}{2k}}$.
Thus, the speedup for the average instance with $N_0=\braket{N_0}$ depends monotonically on $\alpha$,
reaching a $\sqrt{N}$ complexity when $\alpha = k$, just where $\braket{N_0} = 1$, i.e. the case of unique satisfying assignments.
Although it is interesting to note that the speedup can be better than $\sqrt{N}$ for $\alpha < k$, it still implies an exponentially large computation time.
For smaller $\alpha$ there are also efficient classical algorithms to solve random 3-SAT instances~\cite{Mezard2002}.

%====================================================================================================================

\section{Comparison with the common mixing Hamiltonian}\label{sec:transverse}

One could argue that our negative result only holds for the rank one projector Hamiltonian $H_0$ that we used, and much optimistic results could be obtained using the common transverse field Hamiltonian $H_0 = -\sum_{i=1}^n \sigma_i^x$.
This is a valid concern. 
In this section we numerically investigate the spectral gaps up to $n=20$ qubits by using the transverse field Hamiltonian as our mixing Hamiltonian and producing the diagonal Hamiltonian ($H_f$) energies with two different models.
The first one is the 3-SAT model, and the second one belongs to a class of Ising spin glass models described by 3-spin interactions.
Numerically we see that in the worst case, for 3-SAT instances the minimal gap approximately scales like $\frac{1}{\sqrt{N}}$, similar to the case with rank one projector Hamiltonian, however for the spin glass instances, the minimal gap vanishes faster than $\frac{1}{\sqrt{N}}$, indicating a longer adiabatic annealing time for the common transverse field Hamiltonian compared to the case in which the rank one projector Hamiltonian is the mixing Hamiltonian. 
Early studies on small instances indicated that the median spectral gap in this case scales as $1/\text{poly}(n)$, suggesting exponential speedup \cite{young2008size, farhi2001quantum, hogg2003adiabatic, farhi2000quantum, schutzhold2006adiabatic}. However, later it was shown that for the worse case instances the gap closes exponentially with $n$ and that these instances dominate for large $n$ \cite{young2010first}. 
In agreement with these findings, we find that the majority of small instances  have large gaps, but the worse case instances have exponentially small gap. Since the worse case instances dominate at large $n$, we expect that the spectral gap does not scale better than $1/\sqrt{N}$ for large instances. Therefore, in order to reach quadratic speedup with this Hamiltonian, one also needs an optimized annealing schedule that requires precise knowledge of the location of the minimal gap. There exists no known analytical expression for this location, but there is no reason to assume that it would be less intractable than in the case of Eq.~\cref{eq:model}.
From this we conclude that adiabatic quantum annealing also cannot yield better than quadratic speed up with this mixing Hamiltonian.

\begin{figure}[t]
    \centering
    \begin{subfigure}[t]{0.45\textwidth}
         \includegraphics[width=\textwidth, height=5.5cm]{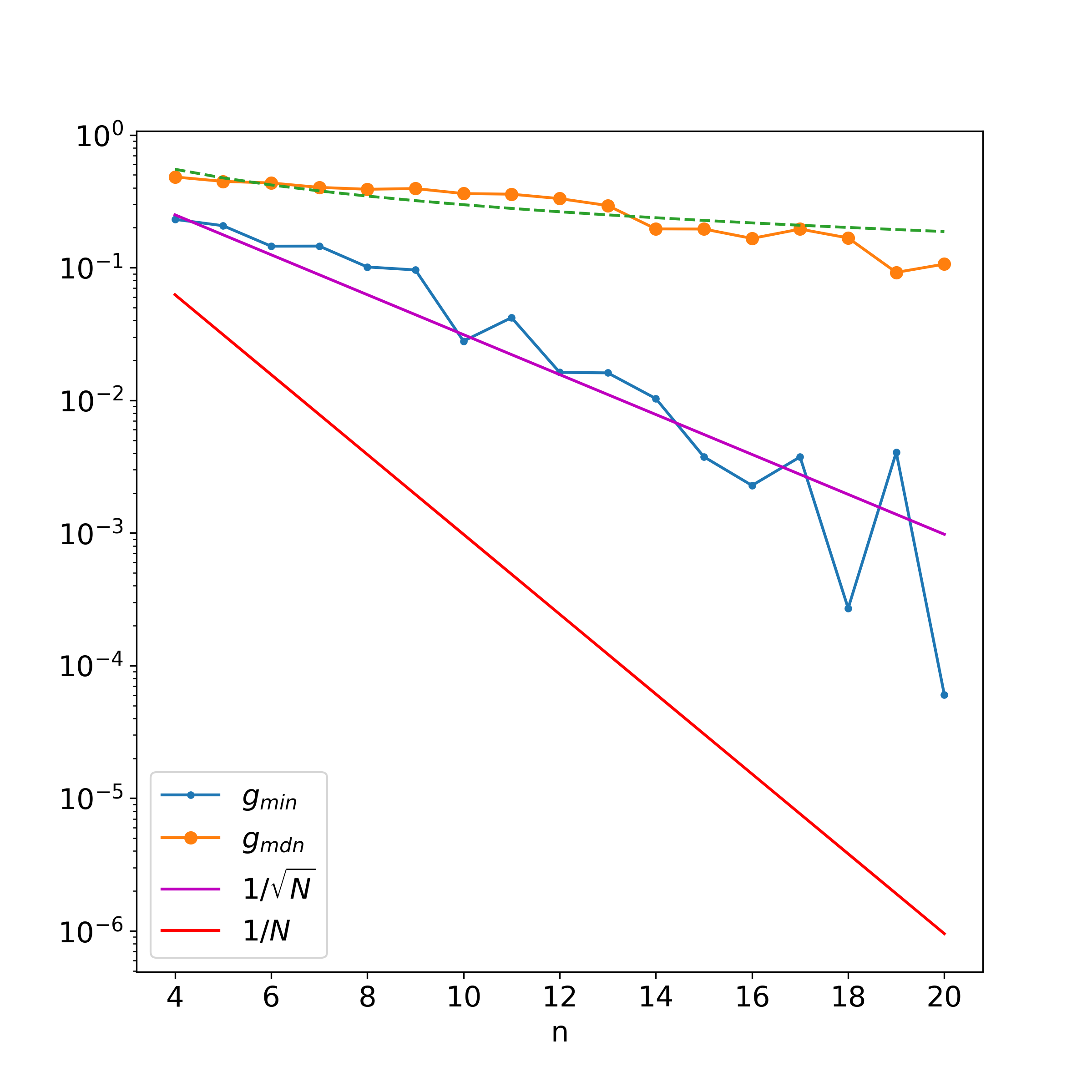}
     
     \end{subfigure}
     \begin{subfigure}[t]{0.45\textwidth}
         \includegraphics[width=\textwidth, height=5.5cm]{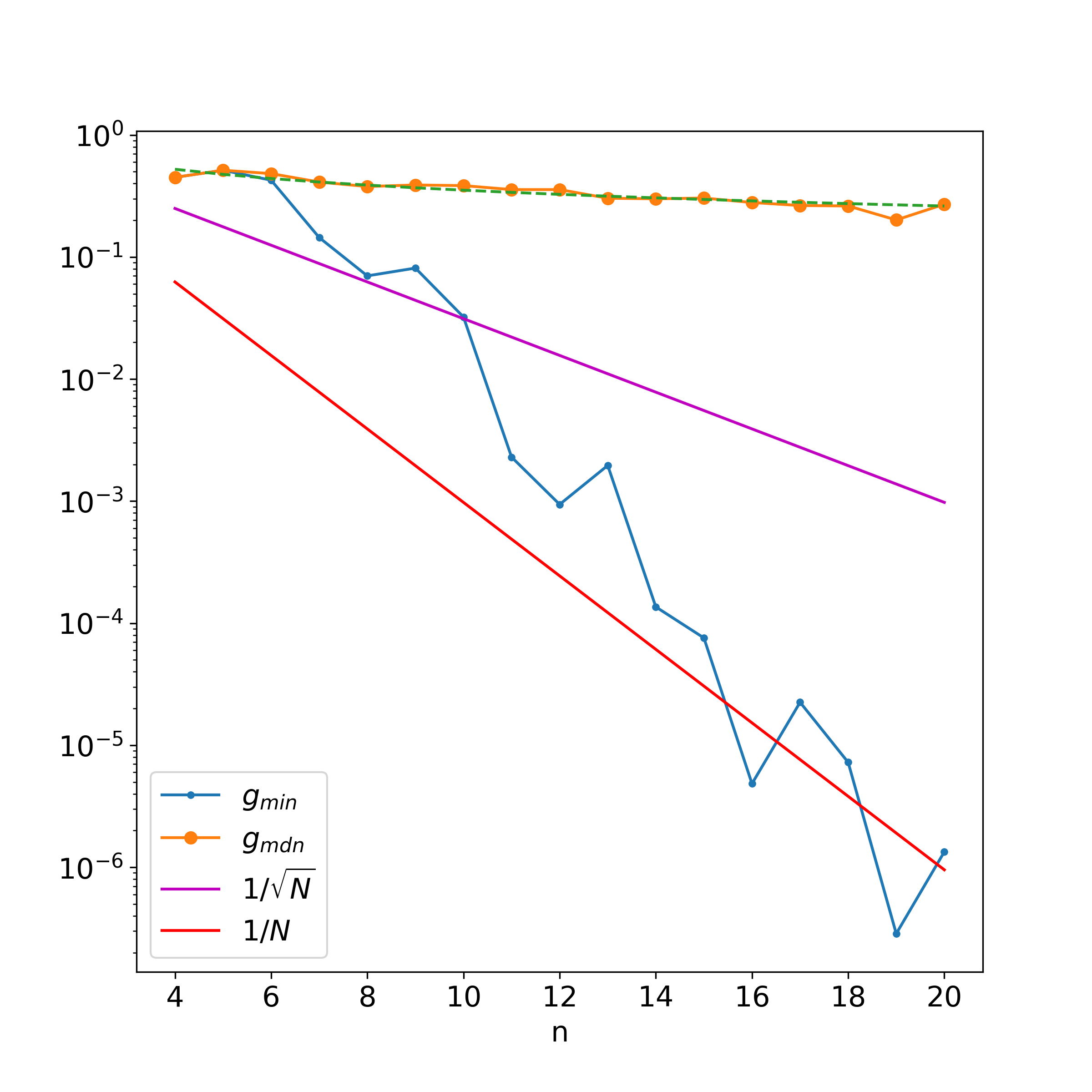}
         
     \end{subfigure}
    \caption{Scaling of the $g_*$ for $H = (1-A)H_0 + A H_f$, with $H_0 = -\sum_{i=1}^n \sigma_i^x$. {\bf Left} corresponds to 3-SAT instances with unique solutions ($N_0=1$). {\bf Right}, to 3-spin model instances with unique ground state.
    For each $n$, we computed the minimal gap $\min_A g(A)$ for a batch of different instances. We show the median and minimum gap over instances, denoted by $g_{mdn}$ and $g_{min}$ respectively, and the dashed line corresponds to a $1/\text{poly}(n)$ fit for $g_{mdn}$. We generated 200 instances for $n \le 13$ and 50 instances for $n>13$. The figure indicates that up to 20 qubits, minimal gap of the 3-SAT instances in the worst case scales approximately as $1/\sqrt{N}$ and for 3-spin model instances it indicates a close to  $1/N$ scaling as the number of qubits is increased.}
    \label{fig:scaling}
\end{figure}

Delving into details, we numerically studied the minimal gap of the Hamiltonian $H(A) = (1-A)H_0 + A H_f$ where the schedule $A\in[0, 1]$.
This setup is equivalent to model \cref{eq:model} as we stated in Sec.~\ref{sec:model}.
We use as the mixing Hamiltonian the transverse field $H_0 = -\sum_{i=1}^n \sigma_i^x$.
We consider target Hamiltonians $H_f$ associated to two different problems.
The first one corresponds to 3-SAT with $\alpha=m/n=4.26$.
The second, to an Ising spin glass Hamiltonian described by 3-spin interactions with couplings $J$ uniformly sampled from $\{\pm 1\}$:
\bal{\label{eq:3spin}
H_f = \sum_{i<j<k} J_{ijk} \sigma_i^z \sigma_j^z \sigma_k^z \,, \quad J_{ijk} = \pm 1 \,.
}
We refer to this model shortly as the 3-spin model.

For 3-SAT we consider instances with unique solutions ($N_0 = 1$).
For the 3-spin model we only consider instances with unique ground state.
At each $n$ and for each instance, we compute $g_*=\min_A g(A)$.

In Fig.~\ref{fig:scaling} we plot the median $g_{mdn}$ and the minimum $g_{min}$ of the minimal gap over instances as a function of $n$ for each model.
We see that in both cases and up to $n=20$ qubits, $g_{mdn}$ scales as $1/\text{poly}(n)$, indicating polynomial median complexity, whereas
$g_{min}$ decreases exponentially, indicating exponential complexity for the worst cases.
In the case of 3-SAT the gap decreases like $1/\sqrt{N}$ for worst cases.
This indicates that the complexity of  adiabatic QA for 3-SAT with the transverse field $H_0$ is similar to the projector $H_0$.
This supports our argument that the exponential shrinkage of the gap for hard instances will not disappear by using a different instance-independent mixing Hamiltonian.
However, with the transverse field $H_0$, the prospects of constructing a schedule that achieves the quadratic speedup are dim,  because we cannot compute $z_*$.
In simulations we observe that $g(z)$ may even have multiple local minima as a function of $z$.

For the 3-spin model, our simulations show a  scaling of the minimal gap close to $1/N$ which is worse than  projector $H_0$, where  the gap does not vanish faster than $1/\sqrt{N}$.
This results disagrees with previous studies~\cite{bapst2013quantum}, who reported exponentially small gaps of order $1/\sqrt{N}$ close to the phase transition point. In any case, the point is that the results with a transverse field $H_0$ are not better than with the projector $H_0$.
Therefore, it seems implausible that the use of the transverse field $H_0$ rather than a projector $H_0$ will improve the worse case performance of an adiabatic QA algorithm.

%====================================================================================================================

\section{Conclusions and discussion}\label{sec:conclusions}

In this paper we have addressed the question of whether quantum speed-up is possible with adiabatic quantum annealing. For this purpose, we have studied a class of Hamiltonians Eq.~\cref{eq:model} which allows us to compute the gap $g(z)$ around its minimal value $g_*=g(z_*)$.
The minimum location $z_*$ is given in terms of a partition sum that is in general intractable to compute.
Given $z_*$ we can construct an annealing schedule for which the time complexity scales as $\bigo{\sqrt{N}}$.
This was first shown for adiabatic Grover search in~\cite{roland2002quantum}, which is a particular instance of our model.
Consequently our construction generalizes this result to arbitrary target Hamiltonians that are diagonal in the computational basis.
Furthermore, we prove that for this Hamiltonian quadratic speed-up is optimal.
This extends earlier results from~\cite{farhi2008make} in which it was shown the same optimality for linear schedules, to the case of arbitrary non-linear schedules.
Thus, adiabatic quantum annealing with Hamiltonian given by~\cref{eq:model} gives at best quadratic speed-up, but only when $z_*$ is tractable to compute, which is not the general case.

The fact that quantum annealing does not yield exponential speedup should come as no surprise. The same holds in fact for classical Simulated Annealing. For instance, \cite{haj88} shows that in order to guarantee an optimal solution, the temperature should be decreased very slowly and the time of the algorithm scales exponential with $n$. This does not say anything about the potential practical value of Simulated Annealing. In fact, it is among the most powerful methods to find approximate solutions for combinatorial optimization problems. The same could be true for quantum annealing. 

A further issue that complicates adiabatic quantum annealing is the fact that, even when $z_*$ is known or can be computed, it needs to be specified to an exponential precision, which prevents practical implementation on a circuit \cite{slutskii2019analog}. The precision problem was also pointed out in \cite{hen2019quantum} which discusses a classical algorithm for unstructured search that realizes quadratic speedup at the expense of an increasing precision in the initial conditions with system size.

Recently, there has been a proposal of quantum adiabatic spectroscopy to empirically find the location $z_*$ where the spectral gap is minimal~\cite{Jordi2022}. It may be interesting to investigate whether this method can be used to estimate the partition functions that we used in this paper.

It is also worth mentioning that the quadratic speed-up obtained in this work corresponds to a model that does not encode the structure of the problem instance in the driver Hamiltonian $H_0$.
Driver Hamiltonians harnessing such structure might improve the speed-up over the more unstructured/unbiased $H_0 = -\ket{\phi}\bra{\phi}$.
How much structure one needs to encode in the model to gain considerable speed-ups is a question that remains open in the quantum computing community~\cite{aaronson2022structure}.

Although we have shown that the prospect of quantum speed-up using adiabatic QA is unlikely, this does not precludes the pursue of non-adiabatic QA or hybrid approaches to achieve speedup.
Analytical methods to explore the complexity of general non-adiabatic algorithms are not fully available and more research needs to be done to fill this breach.

In that line, it is an interesting question to consider what is the response of the model considered here if we use a hybrid protocol.
Would it be possible to find a general time evolution that leads us towards finding the ground state solution with better scaling or at least states that represent quasi-solutions?
We leave the answer to these questions for future work.

%====================================================================================================================

\section{Acknowledgments}
We thank the anonymous referees for their fruitful comments and suggestions on a previous draft of this work.
PN acknowledge support from 
the 'Quantum Inspire – the Dutch Quantum Computer in the Cloud' project (NWA.1292.19.194) of the NWA research program 'Research on Routes by Consortia (ORC)', which is funded by the Netherlands Organization for Scientific Research (NWO).
All authors made equal contribution to this work.

%====================================================================================================================

\appendix

\section{Model analysis}\label{app:model}
We complement the paper with analytical results related to model \cref{eq:model}
$$
H = H_0 + zH_f
$$
where $z$ is the annealing parameter, $H_f$ is diagonal in the computational basis and $H_0 = -\ket{\phi}\bra{\phi}$ is the rank-one Hamiltonian with $\ket{\phi}$ the equal superposition state.
We perform the exact computation of the spectral and gap properties of $H$.

\subsection{Energy spectrum and eigenstates}
Denote $\lambda_0 \le \ldots \le \lambda_{N-1}$ the eigenvalues of $H$. Sort the $m'\le m$ distinct energies for which $N_E>0$  as $E_1< E_2<\ldots < E_{m'}$\footnote{Note that these energies are different from the eigenvalues of $H$.}.
Remember that $N_E$ is the number of states with energy $E$ and $E_i$ can take values in the set $\{0, 1, \dots, m\}$.
\begin{lemma}
\label{lemma1} $H$ has a unique ground state eigenvalue $\lambda_0< zE_1$. 
In addition, $H$ has one non-degenerate eigenvalue in each interval $zE_i< \lambda < zE_{i+1}, i=1,\ldots,m'-1$ and $N_{E_i}-1$ eigenvalues $\lambda=zE_i, i=1,\ldots,m'$. 

\end{lemma}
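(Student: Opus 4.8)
The plan is to treat $H = zH_f - \ket{\phi}\bra{\phi}$ (with $z>0$) as a rank-one perturbation of the diagonal operator $zH_f$, whose spectrum consists of the values $zE_i$ with multiplicity $N_{E_i}$, and to run the standard secular-equation / interlacing analysis, adapted to the fact that $\ket{\phi}$ is the uniform vector.

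First I would isolate the part of the spectrum that does not move. Inside the $N_{E_i}$-dimensional eigenspace of $zH_f$ with eigenvalue $zE_i$, the vector $\ket{\phi}$ has component proportional to the uniform vector on that eigenspace; its orthogonal complement within the eigenspace, of dimension $N_{E_i}-1$, is annihilated by $\ket{\phi}\bra{\phi}$ and hence consists of eigenvectors of $H$ with eigenvalue exactly $zE_i$. This already accounts for $\sum_{i=1}^{m'}(N_{E_i}-1)=N-m'$ eigenvalues. Conversely, I would check that any eigenvector of $H$ with eigenvalue $zE_j$ is of this form: writing the eigenvalue equation as $(zH_f-\lambda)\ket{\psi}=\ket{\phi}\braket{\phi|\psi}$ and projecting onto the $E_j$-eigenspace of $H_f$ forces $\braket{\phi|\psi}=0$ (the projection of $\ket{\phi}$ there is nonzero), so $\ket{\psi}$ lies in that eigenspace and is orthogonal to $\ket{\phi}$.

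Next I would treat the remaining $m'$ eigenvalues. For $\lambda\notin\{zE_1,\dots,zE_{m'}\}$ one has $\braket{\phi|\psi}\neq 0$ and $\ket{\psi}=(zH_f-\lambda)^{-1}\ket{\phi}\,\braket{\phi|\psi}$; taking the inner product with $\ket{\phi}$ gives the secular equation
\[
F(\lambda) := \braket{\phi | (zH_f-\lambda)^{-1} | \phi} = \frac{1}{N}\sum_s \frac{1}{zE(s)-\lambda} = \sum_{i=1}^{m'}\frac{n_{E_i}}{zE_i-\lambda} = 1,
\]
and each root is non-degenerate since $\ket{\psi}$ is then fixed up to scale. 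It remains to count the roots of $F(\lambda)=1$: $F$ has simple poles at the $zE_i$ with positive residues $n_{E_i}$ and $F'(\lambda)=\sum_i n_{E_i}/(zE_i-\lambda)^2>0$, so $F$ is strictly increasing on each interval of $\real\setminus\{zE_i\}$. On $(-\infty,zE_1)$ it increases from $0^+$ to $+\infty$, giving exactly one root, which lies below $zE_1$; on each $(zE_i,zE_{i+1})$ it increases from $-\infty$ to $+\infty$, giving exactly one root; on $(zE_{m'},\infty)$ it is negative, giving none. That is $1+(m'-1)$ roots, which together with the $N-m'$ frozen eigenvalues exhausts all $N$ eigenvalues.

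Finally, the root $\lambda_0<zE_1$ lies below every other eigenvalue (all secular roots exceed $zE_1$, and every frozen eigenvalue equals some $zE_i\ge zE_1$, using $z>0$), so it is the unique, non-degenerate ground state, which gives the claim. The only subtlety I anticipate is the bookkeeping in the degenerate case — ensuring the $N_{E_i}-1$ frozen eigenvectors and the one secular root attached to each energy level are each counted exactly once — which is handled by the converse check in the second step; the rest is elementary monotonicity.
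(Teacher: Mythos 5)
Your proof is correct, and it arrives at the same secular function as the paper: your $F(\lambda)$ is exactly $1-X(\lambda)$ in the paper's notation, and the root count via strict monotonicity of this rational function between its poles $zE_i$ is the identical key step. Where you genuinely differ is in how the degenerate eigenvalues are accounted for. The paper invokes the matrix determinant lemma $\det(M_0+vv^\top)=(1+v^\top M_0^{-1}v)\det(M_0)$ to factor the characteristic polynomial as $p(\lambda)\prod_{i}(\lambda-zE_i)^{N_{E_i}-1}$ and reads the multiplicities $N_{E_i}-1$ off the algebraic factorization, the degree count then automatically exhausting all $N$ eigenvalues. You instead exhibit the frozen eigenvectors explicitly (vectors inside each $zH_f$-eigenspace orthogonal to the uniform vector there) and close the argument with the converse step, projecting $(zH_f-\lambda)\ket{\psi}=\ket{\phi}\braket{\phi|\psi}$ onto the relevant eigenspace to force $\braket{\phi|\psi}=0$. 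Your route is more elementary in that it needs no determinant identity, and it yields as a by-product the eigenvector formula $\psi(s)\propto (zE(s)-\lambda)^{-1}$ for the secular roots, which the paper derives separately in its Lemma~2; the paper's route buys a shorter multiplicity count at the cost of citing the rank-one determinant formula. Both arguments are complete, including the non-degeneracy of the secular roots and the identification of $\lambda_0<zE_1$ as the unique ground state.
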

\begin{proof}
Given an arbitrary real vector $v\in \mathbb{R}^N$ and an invertible matrix $M_0\in\mathbb{R}^{N\times N}$, the determinant of $M=M_0+vv^\top$ satisfies $\det(M)=(1+v^\top M_0^{-1} v)\det(M_0)$~\cite{Zhou2007}.
Thus, 
\bal{
\det(\lambda I - H) &= X(\lambda) \prod_{E=0}^m (\lambda-zE)^{N_E} \\
                    &= p(\lambda) \prod_{i=1}^{m'}(\lambda-zE_i)^{N_{E_i}-1}
}
with 
\bal{
X(\lambda)=1+\frac{1}{N}\sum_{E=0}^m\frac{N_E}{\lambda-zE}=1+\frac{1}{N}\sum_{i=1}^{m'} \frac{N_{E_i}}{\lambda-zE_i}=\frac{p(\lambda)}{\prod_{i=1}^{m'} (\lambda-zE_i)}\label{Xlambda}
}
$p(\lambda)$ is a polynomial of order $m'$ and has thus $m'$ solutions that are given by the zero crossings of $X(\lambda)$. 
$X(\lambda)$ is singular for $\lambda=zE_i$, $i=1,\ldots,m'$ and $\frac{dX}{d\lambda}<0$ for all $\lambda\ne zE_i$. Thus $X(\lambda)$ has a unique zero crossing in the intervals $\lambda < zE_1$ and $zE_i < \lambda < zE_{i+1}, i=1,\ldots,m'-1$. All these eigenvalues are non-degenerate.
The remaining $N-m'$ eigenvalues are solutions of $\prod_{i=1}^{m'} (\lambda-zE_i)^{N_{E_i}-1}=0$. For each $E_i$, there are $N_{E_i}-1$ eigenvalues $\lambda=zE_i$. 
\end{proof}

We can analytically compute the eigenvectors of $H$ for those eigenvalues that satisfy $X(\lambda) = 0$ (i.e. those that satisfy $zE_i<\lambda <zE_{i+1}$, not the eigenvalues $\lambda=zE_i$). In particular,
\begin{lemma}
\label{lemma:eigenvectors}
Let $\lam$ be the eigenvalues of $H$ that satisfy $X(\lam)=0$. The corresponding eigenvectors $\ket{v} = \sum_s v(s)\ket{s}$ have components
\bal{
v(s)\propto \frac{1}{zE(s)-\lam}\,,
}
where $E(s)$ defines the target Hamiltonian $H_f = \sum_s E(s)\ket{s}\bra{s}$.
\end{lemma}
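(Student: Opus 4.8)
The plan is to plug the proposed ansatz directly into the eigenvalue equation $H\ket{v} = \lambda\ket{v}$ and verify consistency, using the fact that $\lambda$ satisfies $X(\lambda)=0$. First I would write $H = -\ket{\phi}\bra{\phi} + zH_f$ with $H_f = \sum_s E(s)\ket{s}\bra{s}$, so that for $\ket{v} = \sum_s v(s)\ket{s}$ the eigenvalue equation reads, componentwise,
\bal{
-\frac{1}{\sqrt N}\Big(\textstyle\sum_{s'} v(s')\Big)\frac{1}{\sqrt N} + zE(s)\,v(s) = \lambda\, v(s)\,,
}
i.e. $(zE(s) - \lambda)\,v(s) = \frac{1}{N}\sum_{s'} v(s') =: \frac{c}{N}$ where $c := \sum_{s'} v(s')$. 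The key observation is that $c \neq 0$ for these eigenvectors: if $c=0$ then $(zE(s)-\lambda)v(s)=0$ for all $s$, forcing $\ket{v}$ to be supported on states with $zE(s)=\lambda$, which are exactly the degenerate eigenvectors with eigenvalue $\lambda = zE_i$ that Lemma \ref{lemma1} handles separately — contradicting $X(\lambda)=0$ (since $X$ is singular, not zero, at those points). Hence $c\neq 0$, and we may solve $v(s) = \frac{c}{N}\cdot\frac{1}{zE(s)-\lambda} \propto \frac{1}{zE(s)-\lambda}$, which is the claimed form.

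To close the argument I would check self-consistency: summing $v(s) = \frac{c}{N(zE(s)-\lambda)}$ over all $s$ gives $c = \frac{c}{N}\sum_s \frac{1}{zE(s)-\lambda} = \frac{c}{N}\sum_{E}\frac{N_E}{zE-\lambda} = -\frac{c}{N}\sum_E \frac{N_E}{\lambda - zE}$, so dividing by $c\neq 0$ yields $1 = -\frac{1}{N}\sum_E \frac{N_E}{\lambda-zE}$, i.e. $1 + \frac{1}{N}\sum_E \frac{N_E}{\lambda-zE} = 0$, which is precisely $X(\lambda)=0$ from \cref{Xlambda}. So the condition $X(\lambda)=0$ is exactly what makes the ansatz consistent, and conversely every such $\lambda$ produces a genuine eigenvector of this form; since Lemma \ref{lemma1} shows these $\lambda$ are non-degenerate, the eigenvector is unique up to scale.

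I do not anticipate a serious obstacle here — the lemma is essentially a direct computation. The only point requiring care is the dichotomy $c=0$ versus $c\neq 0$: one must argue cleanly that the zero-crossing eigenvalues of $X$ are disjoint from the poles $\lambda=zE_i$ (true since $X$ is singular there) so that division by $zE(s)-\lambda$ is legitimate for every $s$, and that $c\neq 0$ necessarily holds for these eigenvectors. Everything else is bookkeeping with the rank-one perturbation structure already exploited in the proof of Lemma \ref{lemma1}.
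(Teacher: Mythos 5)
Your proposal is correct and follows essentially the same route as the paper: both rearrange the eigenvalue equation using the rank-one structure of $H_0$ to get $(zE(s)-\lambda)v(s)\propto\braket{\phi|v}$, note that the inverse of $zH_f-\lambda$ is well defined because the roots of $X$ are disjoint from the poles $zE_i$, and read off $v(s)\propto(zE(s)-\lambda)^{-1}$. Your explicit argument that $c=\sum_{s'}v(s')\neq 0$ and the self-consistency check recovering $X(\lambda)=0$ are slightly more careful than the paper's operator-form derivation, but they do not change the approach.
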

\begin{proof}
The eigen-equation is
\bal{
(H_0 + zH_f)\ket{v} = \lam \ket{v}\,,
}
which implies
\bal{\label{eq:eigeneq}
\ket{v} = - (zH_f - \lam)^{-1} H_0 \ket{v} = (zH_f - \lam)^{-1}\braket{\phi | v} \ket{\phi}\,.
}
Note that the inverse in \cref{eq:eigeneq} is well defined.
Since $\ket{\phi} = \frac{1}{\sqrt{N}}\sum_s \ket{s}$ and $H_f$ is diagonal in the computational basis,
$$
\ket{v} \propto \sum_s (zE(s) - \lam)^{-1} \ket{s}\,.
$$
\end{proof}

We wish to guarantee that the adiabatic algorithm finds a solution $s$ with $E(s) = 0$ when such a solution exists.
We therefore restrict ourselves to problem instances for which $N_0 > 0$.

By definition, the spectral gap is the energy difference between the first excited state and the ground state.
In our model, this might pose an apparent dilemma.
For it, let's take a look at the pattern
$$
\lam_0 < 0 < \lam_1\,,
$$
where $\lam_0$, $\lam_1$ now represent the first and second roots of $X(\lam) = 0$.
The value $\lam = 0$ is an eigenvalue of $H$ only when $N_0 > 1$.
In that case it carries degeneracy $N_0 - 1$.
When $N_0 = 1$, $\lam = 0$ is not an eigenvalue and we can safely define the gap as $\lam_1 - \lam_0$.
However, when $N_0 > 1$, it is raised the question of whether we should take the gap as $0 - \lam_0$ or $\lam_1 - \lam_0$.
It can be proven that the $0$ eigenvalue corresponds to states that belong to the solution space.
This implies that the true energy gap, the one that characterizes the separation between the solution space and first excited non-optimal states, is given by $\lam_1 - \lam_0$.
Next, we give a proof of this making use of a permutation symmetry present in $H$.

\begin{lemma}\label{lemma:gap_def}
The spectral gap of $H$ is given by $g := \lam_1 - \lam_0$, where $\lam_0,\, \lam_1$ are the first and second roots of $X(\lam) = 0$.
\end{lemma}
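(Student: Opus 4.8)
The plan is to exploit the rich permutation symmetry of $H(z)$. Since $H_0 = -\ket{\phi}\bra{\phi}$ is invariant under \emph{every} permutation of the computational basis states (the uniform superposition $\ket{\phi}$ is fixed by all of them), and $H_f$ is invariant under any permutation that maps each energy level to itself, the group $G := \prod_{i=1}^{m'} S_{N_{E_i}}$, which permutes within each level $E_i$ the $N_{E_i}$ basis states $\ket{s}$ with $E(s)=E_i$, commutes with $H(z)$ for all $z$. I would first decompose the Hilbert space into its $G$-isotypic components: the $G$-invariant subspace $\cH_{\mathrm{sym}}$, spanned by the $m'$ level states $\ket{\phi_{E_i}} := N_{E_i}^{-1/2}\sum_{s:\,E(s)=E_i}\ket{s}$, and its orthogonal complement $\cH_{\mathrm{std}}$ of dimension $N-m'$, built from the standard representations of the factors $S_{N_{E_i}}$ (vectors supported on a single level and summing to zero, i.e.\ orthogonal to the corresponding $\ket{\phi_{E_i}}$).

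The next step is to read off how $H$ acts on each piece. Every $\ket{v}\in\cH_{\mathrm{std}}$ is orthogonal to $\ket{\phi}\in\cH_{\mathrm{sym}}$, so $H_0\ket{v}=0$ and $H$ restricts to $zH_f$ on $\cH_{\mathrm{std}}$; this reproduces exactly the eigenvalues $\lambda=zE_i$ with multiplicity $N_{E_i}-1$ from Lemma~\ref{lemma1}. In particular, since $N_0>0$ forces $E_1=0$, the entire $0$-eigenspace (present only when $N_0>1$, with dimension $N_0-1$) sits inside $\cH_{\mathrm{std}}$ and inside $\operatorname{span}\{\ket{s}:E(s)=0\}$, i.e.\ inside the ground-state subspace of $H_f$ --- this is the precise sense in which ``the $0$ eigenvalue corresponds to solution states''. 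On $\cH_{\mathrm{sym}}$, by contrast, $H$ restricts to an $m'\times m'$ matrix whose eigenvalues are precisely the $m'$ zero crossings of $X(\lambda)$ described in Lemma~\ref{lemma1}, among which $\lambda_0<0<\lambda_1$.

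To conclude I would argue that only the spectrum of $H|_{\cH_{\mathrm{sym}}}$ is dynamically relevant. The algorithm starts in the ground state $\ket{\phi}$ at $z=0$, which lies in $\cH_{\mathrm{sym}}$; the orthogonal projector $\Pi_{\mathrm{sym}}=\frac{1}{|G|}\sum_{\pi\in G}\pi$ commutes with $H(z)$ for all $z$ and hence with the time-ordered Schr\"odinger propagator, so the state of the algorithm remains in $\cH_{\mathrm{sym}}$ throughout. Within $\cH_{\mathrm{sym}}$ the ground state is the $\lambda_0$-eigenvector and the first excited state the $\lambda_1$-eigenvector, so the quantity controlling the adiabatic theorem~\cref{eq:AT} is $g=\lambda_1-\lambda_0$, never $0-\lambda_0$: the $0$-eigenspace lies in the decoupled subspace $\cH_{\mathrm{std}}$ and is never populated. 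As a sanity check, Lemma~\ref{lemma:eigenvectors} gives the $\lambda_0$-eigenvector components $v(s)\propto (zE(s)-\lambda_0)^{-1}$, which tend to the uniform superposition over the solution states as $z$ grows, so the run ends with large overlap on the ground space of $H_f$, consistent with the stopping condition.

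The main obstacle I anticipate is making the decoupling rigorous rather than heuristic: one must verify carefully that $[\Pi_{\mathrm{sym}},H(z)]=0$ (immediate, since $G$ is finite and each $\pi$ commutes with $H(z)$) genuinely implies the evolved state never leaves $\cH_{\mathrm{sym}}$, and that the adiabatic bound~\cref{eq:AT} may then be invoked for the restricted Hamiltonian with gap $\lambda_1-\lambda_0$ in place of the full spectral gap. The remaining bookkeeping --- identifying the isotypic components, checking $H_0|_{\cH_{\mathrm{std}}}=0$, and matching the eigenvalue multiplicities with Lemma~\ref{lemma1} --- is routine linear algebra.
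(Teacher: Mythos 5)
Your proof is correct and follows essentially the same route as the paper: both exploit the within-level permutation symmetry to split the Hilbert space into the symmetric subspace spanned by the level states (your $\cH_{\mathrm{sym}}$, the paper's $\cH_a$) and its orthogonal complement, observe that the evolution starting from $\ket{\phi}$ never leaves the symmetric block, and conclude that the dynamically relevant gap is $\lam_1-\lam_0$. The only difference is organizational --- the paper deduces that the $X(\lam)=0$ eigenvectors lie in $\cH_a$ from their non-degeneracy and then infers the block decomposition, whereas you decompose first and compute $H$ restricted to each block, which has the side benefit of explicitly establishing the paper's unproved assertion that the $\lam=0$ eigenspace (present when $N_0>1$) sits inside the solution space.
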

\begin{proof}

The proof starts by noticing that the whole quantum dynamics can be restricted to a space of dimension $m' < N$.
To see this, define the normalized states
\bal{\label{eq:a}
\ket{a_i} := \frac{1}{\sqrt{N_{E_i}}} \sum_{s | E(s) = E_i} \ket{s}\,, \quad  i = 1, \ldots, m'\,.
}
Denote the span of these states as $\cH_a$.
Consider an eigenvalue $\lam$ of $H$ that is a root of $X(\lam) = 0$ with eigenvector $\ket{v}$.
For each energy $E$ for which $N_E>1$ define permutation operators $R_E$ that interchange two arbitrary states $s$ corresponding to the same $E = E(s)$ and leaves the rest unchanged. For energies $E$ for which $N_E=1$ define $R_E$ as the identity operator. By construction, we have $[R_E, H] = 0$ for all $E$.
Since $\lam$ is a non-degenerate eigenvalue, its eigenvector $\ket{v}$ is invariant under the action of $R_E$, i.e. $R_E\ket{v} = \ket{v}$ for all $E$.
So we conclude that $\ket{v}$ should be a linear combination of $\ket{a_i}$ states, therefore $\ket{v} \in \cH_a$.
Since we have $m'$ eigenvectors with eigenvalues $\lam$ that satisfy $X(\lam)=0$, these eigenvectors also span $\cH_a$.
Then, the Hamiltonian $H$ can be written as
$$
H = H_a + H_a^\perp\,,
$$
where $H_a$ is an operator that acts only on $\cH_a$ and $H_a^\perp$ acts on the orthogonal complement space $\cH_a^\perp$.

Finally, note that the initial state $\ket{\phi}$ can be written as
$$
\ket{\phi} = \sum_i \sqrt{n_{E_i}} \ket{a_i}\,,
$$
where $n_{E_i} := N_{E_i}/N$ for $i=1, \ldots, m'$, so it also belongs to $\cH_a$.
Since $[H_a, H_a^\perp] = 0$ the entire quantum evolution becomes restricted to $\cH_a$.
The initial state has no component in $H_a^\perp$ and the dynamics will not develop a component in $H_a^\perp$ at any later time.
Thus, effectively, the Hamiltonian is given by
$$
H=H_a = \sum_{i=0}^{m' - 1} \lam_i \ket{v_i}\bra{v_i}\,.
$$
From this we conclude that the gap is given by $\lam_1 - \lam_0$.

\end{proof}

%===================================================================================================

\subsection{Gap}\label{sec:gap_analysis}
Consider a given instance with density of states $\{n_E\}$ with $n_E = \frac{N_E}{N}$ and $N_0>0$ and define 
\bal{
Z_p := \sum_{E=1}^m \frac{n_E}{E^p}\,.
} 
We show the following result.
\begin{lemma}
\label{lemma5}
For a given instance, the lowest two eigenvalues of $H$, $\lambda_0< 0 < \lambda_1$, are approximately given by
\bal{\label{lambda01}
\lambda'_{1,0}=z\frac{z-Z_1\pm \sqrt{(z-Z_1)^2+4 \frac{N_0}{N} Z_2}}{2Z_2}\,.
}
where prime denote approximate quantities. The approximation for $\lambda_0$ has relative error $|\delta\lambda_0/\lambda_0| := \left|\frac{\lam_0 - \lam'_0}{\lambda_0}\right| \le \delta$, for $z\ge z_0 := Z_1 - \frac{Z_2^2}{Z_3}\delta$.
The approximation for $\lambda_1$ has relative error $|\delta\lambda_1/\lambda_1| := \left|\frac{\lam_1 - \lam'_1}{\lambda_1}\right|\le \delta$ for $z\le z_1 := Z_1 +\frac{Z_2^2}{Z_3}\delta$.
The spectral gap is approximately given by
\bal{\label{eq:gap_approx_proof}
g' =\lambda'_1-\lambda'_0=\frac{z}{Z_2}\sqrt{\left(z - Z_1 \right)^2 +4 \frac{N_0}{N} Z_2}\,.
} 
The approximation for $g$ has relative error $|\delta g/ g| := \left|\frac{g - g'}{g}\right| \le \delta$ for $z_0\le z\le z_1$. 
The minimal spectral gap $g_*$ occurs for $z = z_* := Z_1 + \bigo{n_0}$ where $n_0 = N_0/N$ and is given by
\bal{\label{gap_min}
g_*=\min_{z}g(z)=2Z_1\sqrt{\frac{N_0}{NZ_2}} +\bigo{n_0^{3/2}}}\,.
\end{lemma}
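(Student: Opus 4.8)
The plan is to start from the characteristic equation $X(\lambda)=0$ in Eq.~\cref{Xlambda}, isolate the $E=0$ term, and write
\bal{\label{eq:plan_X}
1 + \frac{n_0}{\lambda} + \frac{1}{N}\sum_{i\,:\,E_i\ge 1} \frac{N_{E_i}}{\lambda - zE_i} = 0\,,
}
then exploit that near the minimal gap both roots $\lambda_0<0<\lambda_1$ are small (of order $1/\sqrt N$, hence $\ll z$ since $z\approx Z_1=\Theta(1)$). So I would Taylor-expand each denominator with $E_i\ge 1$ as $\frac{1}{\lambda - zE_i} = -\frac{1}{zE_i}\left(1 + \frac{\lambda}{zE_i} + \frac{\lambda^2}{z^2E_i^2} + \cdots\right)$, giving
\bal{\label{eq:plan_expand}
1 + \frac{n_0}{\lambda} - \frac{Z_1}{z} - \frac{\lambda Z_2}{z^2} - \frac{\lambda^2 Z_3}{z^3} - \cdots = 0\,,
}
using $Z_p = \sum_{E\ge 1} n_E/E^p$. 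Truncating after the $Z_2$ term and multiplying through by $\lambda z^2$ yields a quadratic $Z_2\lambda^2 - z(z - Z_1)\lambda - z^2 n_0 = 0$, whose two roots are exactly $\lambda'_{1,0}$ in Eq.~\cref{lambda01}. Subtracting the two roots gives the gap formula Eq.~\cref{eq:gap_approx_proof} directly, and minimizing $z\sqrt{(z-Z_1)^2 + 4n_0 Z_2}/Z_2$ over $z$: the radicand is minimized near $z=Z_1$ but the prefactor $z$ is increasing, so the true minimizer is $z_* = Z_1 + \bigo{n_0}$; substituting back gives $g_* = 2Z_1\sqrt{n_0/Z_2} + \bigo{n_0^{3/2}}$, which is Eq.~\cref{gap_min}.

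For the error bounds I would control the truncated tail $R(\lambda) := -\sum_{p\ge 3} \lambda^{p-1} Z_p / z^p = -\frac{\lambda^2 Z_3}{z^3}\left(1 + \bigo{\lambda/z}\right)$. Since $|\lambda_{1,0}| = \bigo{1/\sqrt N}$ near the minimum and $Z_p$ are all $\Theta(1)$, the dropped terms are $\bigo{1/N}$ relative to the retained $Z_2\lambda/z^2$ term, so the relative error in each root is $\bigo{1/N}$ at $z=z_*$. To get the stated explicit thresholds $z_0, z_1$ I would bound how fast the error grows as $z$ moves away from $z_*$: the root $\lambda_0$ stays $\bigo{n_0/|z-Z_1|}$-small while $\lambda_1$ grows like $z(z-Z_1)/Z_2$, so the $Z_3$ correction to $\lambda_1$ becomes comparable to $\lambda_1$ itself once $|\lambda_1 Z_3/z^3| \sim \delta \cdot |Z_2/z^2|$, i.e. $|z - Z_1| \sim Z_2^2\delta/Z_3$, which is exactly the claimed window $[z_0,z_1]$. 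A symmetric argument handles $\lambda_0$ for $z \ge z_0$. The gap bound then follows from the two individual bounds by the triangle inequality together with $g = \lambda_1 - \lambda_0 \ge |\lambda_1| $ and $\ge |\lambda_0|$, so relative errors don't blow up when combined.

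The main obstacle I anticipate is making the error analysis fully rigorous rather than heuristic: the naive remainder estimate uses $|\lambda| \ll z$, but I need to establish \emph{a priori} that the \emph{true} roots $\lambda_{0},\lambda_1$ (not just the approximate ones) lie in the regime where the expansion converges and the tail is small, which requires a bootstrap/fixed-point argument — assume $|\lambda|$ is small, show the characteristic equation then forces it to be small of the claimed order, and check consistency. One also has to be careful that the geometric series $\sum_p (\lambda/(zE_i))^p$ converges for \emph{every} $E_i\ge 1$, i.e. that $|\lambda| < z$ (the worst case $E_i = 1$), which again holds because $|\lambda| = \bigo{1/\sqrt N}$ while $z \approx Z_1$ is bounded below. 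A secondary subtlety is the minimization step: one must verify that $g(z)$ is genuinely minimized in the interior at $z_*=Z_1+\bigo{n_0}$ and not at the endpoints of the validity window, and that the $\bigo{n_0}$ shift in $z_*$ only perturbs $g_*$ at order $n_0^{3/2}$ — this follows because $g'(z_*) = 0$ to leading order so the correction enters quadratically. The remaining computations — solving the quadratic, expanding the square root in $n_0$ — are routine. Full details are deferred to the remainder of this appendix.
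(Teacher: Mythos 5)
Your proposal follows essentially the same route as the paper's proof: expand the characteristic equation $X(\lambda)=0$ to low order in $\lambda$ around $0$, solve the resulting quadratic $Z_2\lambda^2 - z(z-Z_1)\lambda - z^2 n_0=0$ for $\lambda'_{1,0}$, minimize the resulting gap formula to get $z_*=Z_1+\bigo{n_0}$, and control the truncation error via the cubic $Z_3$ term, which yields exactly the window $|z-Z_1|\le \frac{Z_2^2}{Z_3}\delta$ (the paper does this by first-order perturbation of the roots, giving $|\delta\lambda/\lambda'|=\frac{NZ_3}{z^2\sqrt{D}}\lambda'^2$, which is equivalent to your comparison of the cubic and linear terms). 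The only quibble is your parenthetical claim that the relative error at $z=z_*$ is $\bigo{1/N}$ — since $|\lambda'|\sim\sqrt{n_0}$ there, the correct order is $\bigo{\sqrt{N_0/N}}$ — but this is still $\ll\delta$ and does not affect the argument.
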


\begin{proof}
Since $g=\lambda_1-\lambda_0$, we must solve for the two lowest eigenvalues $\lambda_0,\lambda_1$ of the characteristic equation. From Lemma~\ref{lemma1}, $\lambda_0<0<\lambda_1$. When the gap $g=\lambda_1-\lambda_0$ is small, both $\lambda_0$, $\lambda_1$ are close to $0$.
For $z > 0$ and to leading order in $\lam$,
\bal{\label{eq:char_expansion}
X(\lambda)&= 1+\frac{N_0}{N\lambda}+\frac{1}{N}\sum_{E\ge 1} \frac{N_E}{\lambda - zE}=\frac{1}{Nz\lambda}\left(N_0 z + b \lambda -a\lambda^2 - \frac{NZ_3}{z^2}\lambda^3 + \cO(\lambda^4)\right)\,.
}
with $b := N(z-Z_1)$ and $a := N\frac{Z_2}{z}$.
The solution to the quadratic approximation to $X(\lam) = 0$ is
\bal{\label{eq:lambdas}
\lambda'_{1,0}=\frac{b\pm \sqrt{D}}{2a}\,,\qquad D = b^2+4N_0NZ_2\,.
}
We easily verify that this solution satisfies $\lambda_0<0<\lambda_1$. 
The gap is $
g' =\sqrt{D}/a$ which yields Eq.~\cref{eq:gap_approx_proof}. 
It is minimized when $z = z_* = Z_1 + \bigo{n_0}$ which gives Eq.~\cref{gap_min}.

The error in the gap estimate is due to the errors in $\lambda'_0$ and $\lambda'_1$.
For $z_*- \Delta z \le z\le z_*+ \Delta z$ both errors are small for some sufficiently small $\Delta z > 0$ and the gap is accurately estimated by Eq.~\cref{eq:gap_approx_proof}.
Denote $\delta \lambda_i$ the error in $\lambda'_i$ for $i=0, 1$.
To estimate $\delta \lambda_i$ we solve the third order correction to the quadratic approximation to $X(\lam) = 0$ as follows.
Consider
$$
X'(\lam) - N\frac{Z_3}{z^2} \lam^3 = 0
$$
where $X'(\lam) = -a \lam^2 + b \lam + c$ is the quadratic approximation in \cref{eq:char_expansion} and $\lam = \lam' + \delta\lam$ with $X'(\lam') = 0$. By solving to first order in $\delta\lam$, we have
\baln{
\delta\lam = \frac{N\frac{Z_3}{z^2}}{-2a\lam' - 3 N\frac{Z_3}{z^2} \lam'^2 + b} \lam'^3
}
Using \cref{eq:lambdas} and ignoring the $\bigo{\lam'^2}$ term in the denominator we have
\baln{
\delta\lam = \frac{N\frac{Z_3}{z^2}}{-2a\lam' + b} \lam'^3=\mp \frac{NZ_3}{z^2 \sqrt{D}}\lam'^3\,.
}
Therefore,
\bal{\label{eq:rel_error}
\left| \frac{\delta\lam}{\lam'} \right| = \frac{NZ_3}{z^2 \sqrt{D}}\lam'^2\,.
}
We wish to find a validity interval $z_0 \le z \le z_1$ in which the relative error remains bounded as
\bal{\label{eq:delta_bound}
\left| \frac{\delta\lam}{\lam'} \right| \le \delta\,,
}
for $\lam = \lam_0,\,\lam_1$ and some $\delta \ll 1$.
It can be easily shown that the relative error $|\delta\lam / \lam'|$ increases with $z$ for $\lam'_1$ and decreases for $\lam'_0$.
Therefore the maximum relative error is achieved at the boundaries $z_0$ and $z_1$.
For $\lam_0$ the maximum is achieved at $z=z_0$ and for $\lam_1$, at $z=z_1$.
Then, it is enough to find a suitable $\epsilon > 0$ such that $z_0 = Z_1 - \epsilon$ and $z_1 = Z_1 + \epsilon$ in order to satisfy condition \cref{eq:delta_bound}.
Combining \cref{eq:rel_error} with condition \cref{eq:delta_bound} and assuming that $\epsilon \gg n_0$ we obtain
$$
\epsilon \le \frac{Z_2^2}{Z_3}\delta\,.
$$
Therefore, by defining $\epsilon := \frac{Z_2^2}{Z_3}\delta$ we ensure that \cref{eq:delta_bound} is satisfied in the interval $Z_1 - \epsilon \le z \le Z_1 + \epsilon$.
Finally, note that the bound in Eq.~\cref{eq:delta_bound} directly bounds the relative error in the gap $|\delta g / g' | \le \delta$
\footnote{We prove this by noticing that $\frac{\delta g}{g'} = \frac{\delta\lam'_1 - \delta\lam'_0}{g'}$. Then, $\left| \frac{\delta g}{g'} \right| \le \delta \frac{|\lam'_1| + |\lam'_0|}{g'} \le \delta$, where in the last step we used the formulas for $\lam'_0$, $\lam'_1$ and $g'$.}
where $\delta g = g - g'$.

It remains to determine the dependency of $\delta$ with respect to $n$.
For consistency, $\delta$ has to be sufficiently small to satisfy $z_0>0$.
This requires that $\frac{Z_1 Z_3}{Z_2^2} > \delta$, meaning that if the quotient $\frac{Z_1 Z_3}{Z_2^2}$ decreases with $n$, $\delta$ must decrease with $n$ as well.
Using the definition of $Z_p$ and the fact that $1 \le E \le m = \bigo{n}$, it can be shown that $\delta = \bigo{1/n^2}$ is a sufficient condition to ensure $\frac{Z_1 Z_3}{Z_2^2} > \delta$.
Nevertheless, we give a typicality argument to support the claim that $\delta \ll 1$ independent of $n$ is enough to maintain consistency.
Assume that $Z_p$ can be approximated by $1/\braket{E}^p$ for large $n$.
This is true, in particular, for 3-SAT, where $\braket{E} = m/8$ with $m = \alpha n$ and $\alpha$ a constant.
Then, for large $n$, the quantity $\frac{Z_2^2}{Z_3 Z_1} \rightarrow 1$.
Numerical results for 3-SAT show that the quotient $\frac{Z_1 Z_3}{Z_2^2}$ asymptotically approaches one from above for large $n$.
Therefore, $\frac{Z_1 Z_3}{Z_2^2}$ is order 1 in the typical case, and taking $\delta$ sufficiently small and independent of $n$ is enough to ensure the consistency of the approximation.
\end{proof}
In Lemma \ref{lemma5} we specified the annealing interval $z\in [z_0, z_1]$ for which both $\lambda_0$ and $\lambda_1$ (and hence the gap) are accurately estimated.
We compare the approximation Eq.~\cref{eq:gap_approx_proof} with the exact gap for a random 3-SAT instance with $n = 20$ spins in Fig.~\ref{fig:gap_approx}.
It confirms that the errors in $\lambda_{0,1}$ and $g$ are small in their range of validity.
\begin{figure}[ht!]
\begin{center}
\includegraphics[height=0.6\textwidth]{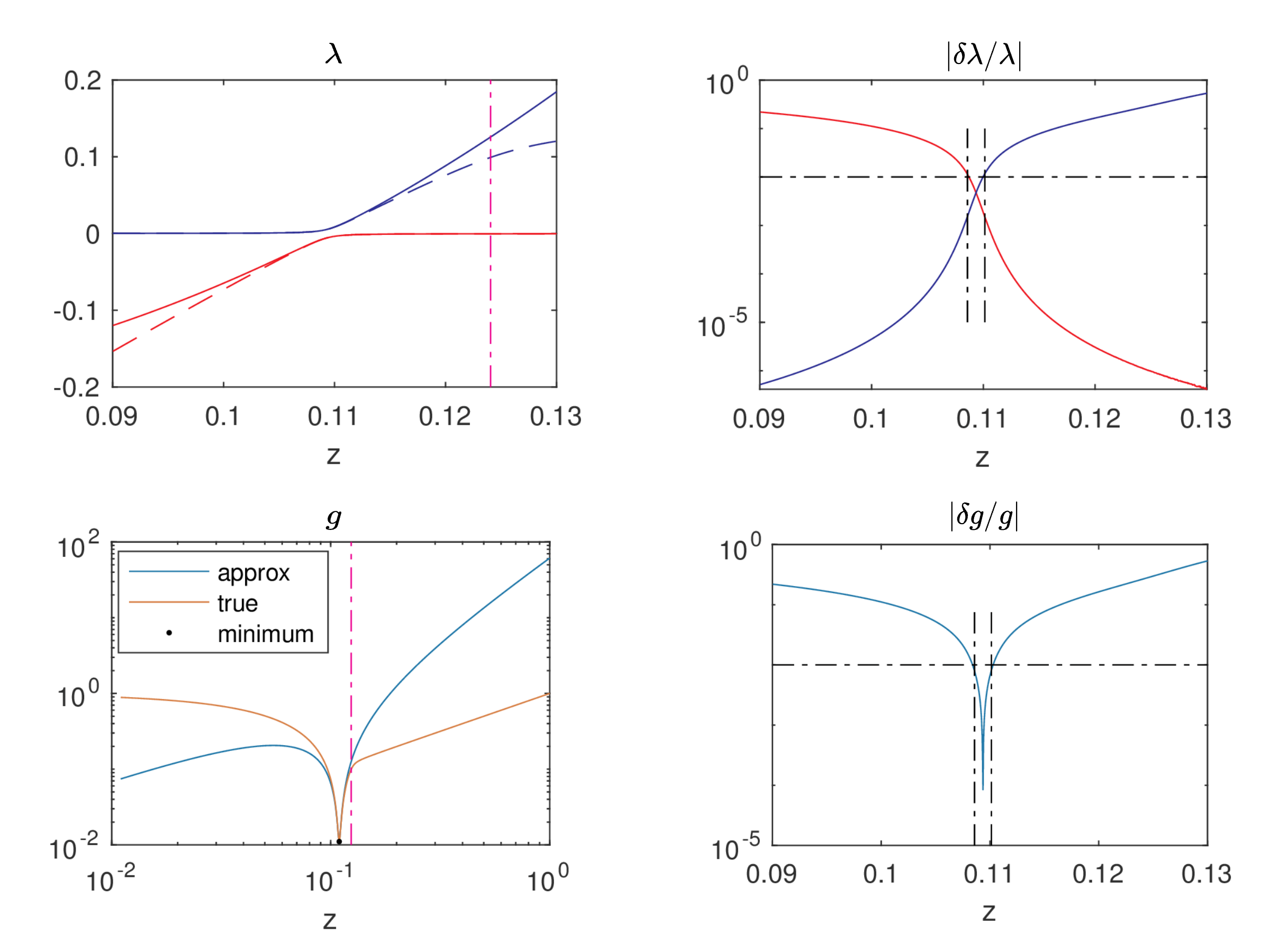}
\caption{
{\bf Left top}: Comparison between approximate (solid line) and true (dashed line) $\lambda_0$ (red) and $\lambda_1$ (blue) eigenvalues (see \cref{lambda01}) for one random 3-SAT instance with $n = 20$ and $\alpha = 4.2$. 
Dashed magenta line is detection bound $z_f=z_* + \sqrt{\frac{4 N_0Z_2}{N\delta}}$.
{\bf Left bottom}: Approximate vs true gap $g=\lambda_1 - \lambda_0$ (see \cref{eq:gap_approx_proof}) in the range $z\in [0, 1]$ and minimum location $(z_*,\, g_*)$ (black dot). 
{\bf Top right}: Relative error in the eigenvalues $\left| \delta \lambda_{0}/\lambda_{0}\right|$ (red) and $\left| \delta \lambda_{1}/\lambda_{1} \right|$ (blue).
{\bf Bottom right}: Relative error in the gap $\left| \delta g/g \right|$. Vertical black dashed lines indicate the validity intervals $z\ge z_0$ for $\lambda_0$ and $z\le z_1$ for $\lambda_1$ with $z_{1,0}=Z_1 \pm \delta \frac{Z_2^2}{Z_3}$ where the relative errors  $\left| \delta \lambda_{0,1}/\lambda_{0,1}\right|\le \delta=0.01$, respectively.
}
\label{fig:gap_approx}
\end{center}
\end{figure}

The next lemma assesses the behavior of the gap in the remaining of the annealing interval and it shows that it is not exponential.
\begin{lemma}
\label{lemma:large_gap}
For $0\le z\le z_0$ the spectral gap $g > \bigo{1/\polyn}$.
\end{lemma}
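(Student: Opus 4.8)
The plan is to reduce the statement to an inverse-polynomial lower bound on $|\lambda_0|$ and then extract that bound directly from the characteristic equation $X(\lambda)=0$. First I would invoke Lemma~\ref{lemma:gap_def} together with Lemma~\ref{lemma1}: the gap is $g=\lambda_1-\lambda_0$ with $\lambda_0<0<\lambda_1$ for every $z>0$, and at $z=0$ trivially $g=1$ since $H=-\ket{\phi}\bra{\phi}$. Hence $g>-\lambda_0=|\lambda_0|$ on $(0,z_0]$, and it suffices to bound $|\lambda_0|$ from below by an inverse polynomial in $n$ on the whole interval $[0,z_0]$ (the point $z=0$ being handled separately by $g=1$).

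For the key step, write $\lambda_0=-|\lambda_0|$, so that $X(\lambda_0)=0$ in Eq.~\cref{Xlambda} becomes $\tfrac1N\sum_{E=0}^m N_E/(zE+|\lambda_0|)=1$. Dropping the non-negative $E=0$ term and using the elementary inequality $zE+|\lambda_0|\le E\,(z+|\lambda_0|)$, valid for every $E\ge 1$, I obtain $1\ge \frac{1}{z+|\lambda_0|}\cdot\frac1N\sum_{E\ge 1}N_E/E=\frac{Z_1}{z+|\lambda_0|}$, that is, $|\lambda_0|\ge Z_1-z$. On the stated range $z\le z_0=Z_1-\delta Z_2^2/Z_3$ (Lemma~\ref{lemma5}) this gives, uniformly in $z\in[0,z_0]$, the bound $g>|\lambda_0|\ge Z_1-z\ge Z_1-z_0=\delta Z_2^2/Z_3>0$.

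It then only remains to check that $\delta Z_2^2/Z_3$ is itself $\Omega(1/\polyn)$. By Lemma~\ref{lemma5} one may take $\delta=\Omega(1/n^2)$; moreover $Z_3=\sum_{E\ge 1}n_E E^{-3}\le\sum_{E\ge 1}n_E\le 1$, and $Z_2\ge m^{-2}\sum_{E\ge 1}n_E=(1-n_0)/m^2$, which is $\Omega(1/\polyn)$ for any non-trivial instance (say $n_0\le\tfrac12$) because $m=\polyn$. Combining these yields $g=\Omega(1/\polyn)$, which is the claim.

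The computation is short, so there is no real "hard part"; the only genuinely nontrivial decision is \emph{what} to bound — namely $|\lambda_0|$ — and the observation that makes it work, that $z<Z_1$ throughout $[0,z_0]$, which is exactly what makes the non-optimal contributions to $X(\lambda)$ large enough to force $|\lambda_0|\ge Z_1-z$. I expect the remaining difficulties to be purely bookkeeping: treating the endpoint $z=0$ and the degenerate case $N_0>1$ (where $\lambda=0$ is an eigenvalue but the relevant gap is still $\lambda_1-\lambda_0$), both already settled in the discussion preceding Lemma~\ref{lemma5}, and confirming the mild instance-nontriviality assumption needed to keep $Z_2$ and $Z_3$ polynomially bounded away from $0$.
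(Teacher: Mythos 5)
Your proof is correct, and it takes a genuinely different route from the paper's. The paper's proof rescales $\lambda=z\lambda'$, argues that every root of $X(\lambda)=0$ increases with $z$, deduces $\lambda_0(z)\le\lambda_0(z_0)$ on $[0,z_0]$, and then evaluates $\lambda_0(z_0)$ via the quadratic approximation of Lemma~\ref{lemma5} (valid at $z_0$ by construction of $z_0$), obtaining $g>\frac{z_0Z_2}{Z_3}\delta$. You instead extract the exact, non-asymptotic inequality $|\lambda_0|\ge Z_1-z$ directly from the characteristic equation $\frac1N\sum_E N_E/(zE+|\lambda_0|)=1$ using the elementary bound $zE+|\lambda_0|\le E(z+|\lambda_0|)$, which yields the uniform bound $g>|\lambda_0|\ge Z_1-z_0=\delta Z_2^2/Z_3$ with no appeal either to monotonicity of the roots in $z$ or to the Taylor expansion and its error control. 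The resulting constants differ ($\delta Z_2^2/Z_3$ versus $\delta z_0Z_2/Z_3$, yours being typically somewhat weaker since $z_0\approx Z_1\ge Z_2$), but both are $\Omega(1/\polyn)$, which is all the lemma asserts; your closing estimates $\delta=\Omega(1/n^2)$, $Z_3\le 1$, $Z_2\ge(1-n_0)/m^2$ are all sound. Your argument is arguably the more robust of the two: it is self-contained and sidesteps the somewhat delicate step in the paper where monotonicity of $\lambda'_0$ in the rescaled variable must be transferred to $\lambda_0=z\lambda'_0$ (a product of an increasing positive factor and an increasing negative factor), as well as the implicit reliance on the approximation $\lambda'_0$ being accurate exactly at the endpoint $z_0$.
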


\begin{proof}
Define $\lambda=z\lambda'$. Then the characteristic equation $X(\lambda)=0$ becomes $F(\lambda')=\sum_{E=1}^m \frac{n_E}{\lambda'-E}=-z$. $F$ is a decreasing function of $\lambda'$. Increasing $z$, decreases the intercept $F(\lambda')=-z$ and thus increases $\lambda'$. Thus all solutions $\lambda$ of $X(\lambda)=0$ are increasing with $z$. 
The gap is $g(z)=\lambda_1(z)-\lambda_0(z)> -\lambda_0(z)$. 
Since $\lambda_0$ is increasing function of $z$ we have for $0\le z\le z_0$ that $\lambda_0(z)< \lambda_0(z_0)=-\frac{z_0Z_2}{Z_3}\Delta z$. Thus $g(z)> \frac{z_0Z_2}{Z_3}\Delta z =\bigo{1/\polyn}$. 
\end{proof}

This completes the characterization of the gap behavior along the annealing interval until the location of the minimal gap.
It shows the existence of a unique phase transition point at $z_*$ where the minimal gap $g_* \sim \sqrt{N_0/N}$ whereas in the rest of the interval the gap is non-exponential in $n$.

\subsection{When to stop the adiabatic evolution?}
\label{appendix:whentostop}
We compute the annealing point $z_f$ for which it is sufficient to achieve a high probability ratio of finding a solution at the end of the evolution.
\begin{lemma}
\label{lemma:prob}
Define $z_f := z_*+\Delta$ with $\Delta := \sqrt{\frac{4N_0 Z_2}{N\delta}}$ and $\delta\ll 1$ a constant independent of $n$.
The probability ratio in the ground state to detect the optimal solution $E=0$ 
\bal{
\frac{p(E=0)}{p(E>0)}\ge \frac{4 Z_2}{\delta}\,.
}
\end{lemma}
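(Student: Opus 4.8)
The plan is to read off the probability ratio from the exact form of the instantaneous ground state of $H(z_f)$. By Lemma~\ref{lemma:eigenvectors}, the ground state — the eigenvector belonging to the smallest root $\lambda_0<0$ of $X(\lambda)=0$ — is $\ket{v_0}\propto\sum_s(z_fE(s)-\lambda_0)^{-1}\ket{s}$, so a computational-basis measurement of $\ket{v_0}$ returns a configuration of energy $E$ with probability proportional to $N_E/(z_fE-\lambda_0)^2$. I would therefore start from
\[
\frac{p(E=0)}{p(E>0)}=\frac{N_0/\lambda_0^2}{\sum_{E\ge1}N_E/(z_fE-\lambda_0)^2},
\]
the common normalisation cancelling. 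Since $\lambda_0<0$ forces $z_fE-\lambda_0>z_fE>0$ for every $E\ge1$, the denominator is strictly below $z_f^{-2}\sum_{E\ge1}N_E/E^2=NZ_2/z_f^2$, which already gives the clean estimate $\dfrac{p(E=0)}{p(E>0)}>\dfrac{N_0z_f^2}{NZ_2\,\lambda_0^2}$.

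What is then left is to bound $|\lambda_0|$ from above at $z=z_f$. Because $\Delta=\sqrt{4N_0Z_2/(N\delta)}$ is of order $\sqrt{n_0}$, hence $\gg n_0$, the point $z_f=z_*+\Delta$ lies comfortably inside the validity window $z\ge z_0$ of the quadratic approximation of Lemma~\ref{lemma5}, and $z_f-Z_1=\Delta(1+o(1))$. I would insert $z-Z_1=\Delta$ into \cref{lambda01}, and use that the definition of $\Delta$ is precisely tuned so that $4\tfrac{N_0}{N}Z_2=\delta\Delta^2$; the discriminant then collapses to $\Delta^2(1+\delta)$, and rationalising the difference of square roots gives
\[
|\lambda_0|=z_f\,\frac{\Delta\sqrt{1+\delta}-\Delta}{2Z_2}\,(1+o(1))=\frac{2z_f\,N_0/N}{\Delta(\sqrt{1+\delta}+1)}\,(1+o(1)).
\]
Substituting back, the factors $z_f^2$, $N_0$, $N$ and $Z_2$ all cancel against $\Delta^2=4N_0Z_2/(N\delta)$, leaving $\dfrac{p(E=0)}{p(E>0)}>\dfrac{(\sqrt{1+\delta}+1)^2}{\delta}(1-o(1))\ge\dfrac{4}{\delta}$ for $\delta\ll1$. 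Since $Z_2<1$ this is a fortiori the claimed $4Z_2/\delta$; in fact the sharper bound $\approx4/\delta$ is what makes the ``high probability'' statement in the main text hold for $n$-independent $\delta$.

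I expect the only genuine obstacle to be the bookkeeping around $z=z_f$: one has to confirm that $z_f$ really sits where \cref{lambda01} is accurate — it does, since the error analysis in Lemma~\ref{lemma5} shows the relative error of $\lambda_0'$ decreases monotonically in $z$ and $z_f>Z_1>z_0$ — and that both the $\bigo{n_0}$ displacement of $z_*$ from $Z_1$ and the $\bigo{\delta}$ relative error of the approximation are negligible next to $\Delta=\Theta(\sqrt{n_0})$. As a safeguard I would also keep an approximation-free route in reserve: from $X(\lambda_0)=0$ together with the elementary inequality $N_E/(\lambda_0-z_fE)>-N_E/(z_fE)$ one obtains $|\lambda_0|<N_0z_f/\bigl(N(z_f-Z_1)\bigr)$ directly, which yields the same conclusion with slightly looser constants. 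Beyond that the argument is pure algebra — the only inequality actually used is $z_fE-\lambda_0>z_fE$.
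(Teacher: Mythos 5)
Your proof is correct and follows essentially the same route as the paper's: it uses Lemma~\ref{lemma:eigenvectors} for the measurement probabilities $p(E)\propto N_E/(z_fE-\lambda_0)^2$ and the quadratic approximation \cref{lambda01} to bound $|\lambda_0|$ at $z_f$, exactly as the paper does (which bounds $|\lambda_0|\le z N_0/(N\Delta)$ and the denominator by $N/z^2$). Your only deviations are cosmetic improvements — keeping the factor $E^2$ in the denominator bound to get the sharper constant $(\sqrt{1+\delta}+1)^2/\delta\ge 4/\delta\ge 4Z_2/\delta$, and the approximation-free bound on $|\lambda_0|$ from $X(\lambda_0)=0$, which is a nice rigorous safeguard but not a different argument.
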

\begin{proof}
It is reasonable to assume that $z_f > z_*$, that is, stopping the evolution after the phase transition point $z_*$.
Then, we can use Eq.~\cref{lambda01}, which is accurate up to $\bigo{N_0/N}$ corrections, to derive $|\lambda_0|\le \frac{zN_0}{N\Delta}$ for $z\ge z_f$ , where $\Delta = \sqrt{\frac{4N_0 Z_2}{N\delta}}$.

Using Lemma~\ref{lemma:eigenvectors} the probability amplitudes are
\bal{
p(E=0) &\propto \frac{N_0}{\lambda_0^2} \ge \frac{N^2\Delta^2}{ N_0z^2} \\
p(E>0) &\propto \sum_{E\ge 1}^m \frac{N_E}{(zE-\lambda_0)^2}\le \frac{N}{z^2}\\
\frac{p(E=0)}{p(E>0)} & \ge \frac{N\Delta^2}{N_0}= \frac{4 Z_2}{\delta}\,.
}
\end{proof}

The final point $z_f$ is indicated in Fig.~\ref{fig:gap_approx} as the magenta dashed line.
Note that for large $n$ we have $z_f< z_1$ so that the annealing terminates within the range of $z$ values where the approximation of the gap is accurate. 

%====================================================================================================================

\section{Quadratic speed-up: proof of existence}
\label{app:existence_proof}

\begin{proof}
Here we provide a proof of Lemma~\ref{lemma:quad_speedup}. The integrals
$$
\int_0^{z_f} \frac{dz}{g(z)^p}\,,\quad p = 1, 2
$$
can be bounded as follows.
From Lemma~\ref{lemma5}, in the interval $[z_0, z_1]$ the gap is accurately approximated by \cref{eq:gap_approx}.
Set $z_f = z_* + \Delta < z_1$ with $\Delta = \sqrt{\frac{d}{\delta}}$, $d := 4n_0Z_2$ and $\delta \ll 1$ independent of $n$.
Therefore in $[z_0, z_f]$ and up to terms of the same order as the integral,
$$
\int_{z_0}^{z_f} \frac{dz}{g(z)^p} = \int_{z_0}^{z_f} \frac{Z_1^p}{z^p}
     \frac{dz}{\left(\sqrt{(z - Z_1)^2 + 4n_0 Z_2}\right)^p}
\leq \frac{Z_1^p}{z_0^p} \int_{z_0}^{z_f} \frac{dz}{\left( \sqrt{(z - Z_1)^2 + 4n_0 Z_2}\right)^p}\,.
$$
Hence, for $p=1$
\footnote{
For $p=1$ and $p=2$ we have the following primitives
$$
\int \frac{dx}{\sqrt{x^2 + d}} = \frac{1}{2} \ln{\frac{\sqrt{x^2 + d} + x}{\sqrt{x^2 + d} - x}} + \text{ct.}\,,
$$
and
$$
\int \frac{dx}{x^2 + d} = \frac{1}{\sqrt{d}} \tan^{-1}\left(\frac{x}{\sqrt{d}}\right) + \text{ct.}\,,
$$
respectively.
}
\baln{
\int_{z_0}^{z_f} \frac{dz}{\sqrt{(z - Z_1)^2 + 4n_0 Z_2}} &= \frac{1}{2}\left. \ln\left(
\frac{\sqrt{x^2 + d} + x}{\sqrt{x^2 + d} - x}
\right)
\right|^{\Delta}_{z_0-Z_1} \\
&=  \frac{1}{2}\ln\left(\frac{\sqrt{1 + \delta} + 1}{\sqrt{1 + \delta} - 1}\right) - 
\frac{1}{2}\ln\left(\frac{\sqrt{(z_0 - Z_1)^2 + d} + z_0 - Z_1}{\sqrt{(z_0 - Z_1)^2 + d} - (z_0 - Z_1)}\right)\,.
}

Note that $|z_0-Z_1|=\delta \frac{Z_2^2}{Z_3} \gg \sqrt{d}=2\sqrt{Z_2} \sqrt{\frac{N_0}{N}}$ for large $n$ when the number of optimal solutions $N_0$ is $\cO(N^\gamma)$  with $\gamma < 1$. Therefore, to leading order in $d$
$$
\ln\left(
\frac{\sqrt{(z_0 - Z_1)^2 + d} + z_0 - Z_1}{\sqrt{(z_0 - Z_1)^2 + d} - (z_0 - Z_1)}
\right) =- \ln\left(\frac{d}{4(z_0-Z_1)^2}\right) = \bigo{n}\,.
$$
Thus
$$
\int_{z_0}^{z_f} \frac{dz}{\sqrt{(z - Z_1)^2 + 4n_0 Z_2}} = \bigo{n}\,.
$$

For $p=2$
\baln{
\int_{z_0}^{z_f} \frac{dz}{(z - Z_1)^2 + 4n_0 Z_2}& = \frac{1}{\sqrt{d}} \tan^{-1}\left(\frac{x}{\sqrt{d}}\right)\bigg|^\Delta_{z_0 - Z_1} \\
 &= \frac{1}{\sqrt{d}} \left(
\tan^{-1}(1/\sqrt{\delta}) - \tan^{-1}\left(\frac{z_0 - Z_1}{\sqrt{d}}\right)\right)\\
 & \approx  \frac{1}{\sqrt{d}} \pi = \bigo{\sqrt{N/N_0}}\,.
}

Putting all together in \cref{eq:T},
\baln{
C &\le \left(\int_0^{z_0} \frac{dz}{g(z)^2} + \int_{z_0}^{z_f} \frac{dz}{g(z)^2} \right) \left( 2 E_{max} + 28 \int_0^{z_0} \frac{dz}{g(z)} + 28 \int_{z_0}^{z_f} \frac{dz}{g(z)} \right)\\
 &= \left(\int_0^{z_0} \frac{dz}{g(z)^2} + \bigo{\sqrt{N/N_0}} \right) \left( 2 m + 28 \int_0^{z_0} \frac{dz}{g(z)} + \bigo{n} \right)\,.
}
We know from Lemma \ref{lemma:large_gap} that $g > 1/\polyn$ in the interval $[0, z_0]$.
Since $z_0 \le 1$, then
$$
\int_0^{z_0} \frac{dz}{g(z)^p} < \bigo{\polyn}\,.
$$
Thus, $C = \cO(\sqrt{N/N_0})$ up to factors polynomial in $n$. Therefore the implicit schedule implemented by Eq.~\cref{eq:ans} achieves a time complexity
$$
T = \bigo{\sqrt{N/N_0}}\,.
$$
\end{proof}

%====================================================================================================================

\section{Quadratic speed-up: proof of optimality}\label{sec:optimal_proof}

\begin{proof}
For the proof of Lemma~\ref{lemma:quad_optimal} we take advantage of a permutation symmetry present in model~\cref{eq:model}.
Define the set of solution states $S=\{s|E(s)=0\}$. The Hilbert space $\cH_S := \text{Span}\{ \ket{s} | s\in S\}$ is the associated solution space.
Define the permutation operator $R$ which acts in the solution space $\cH_S$ by interchanging two solution state vectors.
Since $[R, H] = 0$, if we start the evolution with the state $\ket{\phi}$, which is invariant under the action of $R$, the symmetry is preserved in the solution space along the entire evolution.
Therefore, $\ket{\psi_t} := U(t)\ket{\phi}$, with $U(t)$ the propagator for $H$, is invariant under permutation in $\cH_S$.
Since the entire dynamics of the problem is restricted to $\cH_a$ (see proof of Lemma~\ref{lemma:gap_def}), we can write $\ket{\psi_t}$ as
\bal{\label{eq:symm}
\ket{\psi_t} = \sum_{i=0}^{m' -1} c_i(t)\ket{a_i} \,,\quad c(t)\in\complex\,.
}
Define the projector $P = \sum_{s\in S} \ket{s}\bra{s}$ onto the solution space $\cH_S$
\footnote{Note that $P$ is different from the projector onto the particular state $\ket{a_0} = \frac{1}{\sqrt{N_0}}\sum_{s\in S} \ket{s}$ (see Eq.~\cref{eq:a}).}.
We can relate $c_0(t)$ in \cref{eq:symm} with the detection probability $p$ at $t = T$ as
$$
p = \bra{\psi_T} P \ket{\psi_T} = |c_0(T)|^2 \rightarrow c_0(T) = \sqrt{p} e^{i\theta_T}
$$
for some $\theta_T$ real.

Define the {\it unnormalized} state
$$
\ket{\omega_t} := \frac{1}{\sqrt{p}} P \ket{\psi_t} = \frac{c_0(t)}{\sqrt{p}} \ket{a_0}\,.
$$
Note that $\ket{\omega_T}$ is normalized.
Define the error measure
\bal{\label{eq:error1}
E(t) := \left\| \ket{\omega_T} - \ket{\omega_t} \right\|^2\ = \left\| e^{i\theta_T}\ket{a_0} - \ket{\omega_t} \right\|^2
= 1 + \braket{\omega_t | \omega_t} - 2\,\re\, e^{-i\theta_T}\braket{a_0 | \omega_t}\,.
}
We differentiate the error \cref{eq:error1} wrt $t$ and obtain
\bal{\label{eq:deriv1}
\partial_t E = \frac{i}{p} \bra{\psi_t} [H, P] \ket{\psi_t} - \frac{2}{\sqrt{p}}\,\im\, e^{-i\theta_T} \bra{a_0} PH \ket{\psi_t}\,.
}
where we use that $\ket{\psi_t}$ satisfies the Schr\"odinger equation $i\partial_t \ket{\psi_t} = H \ket{\psi_t}$.
For $s\in S$,
$$
H\ket{s} = \left(z\sum_{s'}E(s')\ket{s'}\bra{s'}-\ket{\phi}\bra{\phi}\right) \ket{s} = -\frac{1}{\sqrt{N}}\ket{\phi}\,.
$$
The commutator can be computed from
\baln{
HP = \sum_{s\in S} H\ket{s}\bra{s} = -\sqrt{\frac{N_0}{N}} \ket{\phi}\bra{a_0}
}
as
\baln{
[H, P] = HP - (HP)^\dagger = \sqrt{\frac{N_0}{N}} \left( \ket{a_0}\bra{\phi} - \ket{\phi}\bra{a_0} \right)\,.
}
Inserting the commutator in \cref{eq:deriv1} and taking the absolute value we get
$$
|\partial_t E| \leq 2 \sqrt{\frac{N_0}{N}} \left( \frac{1}{p}  + \frac{1}{\sqrt{p}}\right)
$$
where we used that $P\ket{a_0} = \ket{a_0}$.
By integrating in time and using $|E(T) - E(0)| \le \int_0^T |\partial_t E| dt$,
$$
|E(T) - E(0)| \leq 2 T \sqrt{\frac{N_0}{N}} \left( \frac{1}{p}  + \frac{1}{\sqrt{p}}\right)\,.
$$
From definition \cref{eq:error1} we have
\bal{\label{eq:boundaries1}
E(T) = 0\,,\quad E(0) = \left| e^{i\theta_T} - \sqrt{\frac{N_0}{Np}} \right|^2\,.
}
Therefore
$$
\left| e^{i\theta_T} - \sqrt{\frac{N_0}{Np}} \right|^2 \leq 2 T \sqrt{\frac{N_0}{N}} \left( \frac{1}{p}  + \frac{1}{\sqrt{p}}\right)
$$
from which we obtain the final result
$$
\sqrt{\frac{N}{N_0}} \frac{p}{2} \frac{\left( 1 - \sqrt{\frac{N_0}{Np}} \right)^2}{ 1 + \sqrt{p} } \le T\,.
$$
\end{proof}

%====================================================================================================================

\section{Random 3-SAT: density of states approximation}\label{sec:3sat}

For 3-SAT on $n$ spins with $m$ clauses define $E(s):= E_s = \sum_{a=1}^m e_a(s)$ with
$$
e_a(s)=\prod_{i\in \cS_a}\frac{1+J_{ai}s_i}{2}
$$
and $\cS_a$ a subset of 3 spins. 
A specific instance of 3-SAT defines a density of states $N_E=\sum_{s=1}^N \delta_{E,E_s}$ with $N=2^n$.

In general, $N_E$ is intractable to compute. However, for random 3-SAT we can compute its expected value and the fluctuations, as follows. $J_{ai} = \pm 1$ are independent uniform random variables. Therefore, 
the $e_a(s)=0,1$ are independent binary variables with $p(e_a=1)=\frac{1}{8}$ and 
the probability that $E(s) = E$ is binomial distributed 
\bal{
p_E = \binom{m}{E}p^E(1-p)^{m-E}\,,\qquad p=\frac{1}{8}\label{binom}\,.
}
Therefore the expected density of states satisfies $\braket{n_E} = p_E$ where $n_E = \frac{N_E}{N}$.

To compute the variance in $n_E$ due to instance by instance fluctuations, we need to compute the correlation matrix $\Sigma_{EE'} := \braket{n_E n_{E'}} - p_E p_{E'}$.
\begin{lemma}
\label{lemma2}
For random 3-SAT 
\bal{\label{eq:nene}
\braket{n_E n_{E'}} = \frac{p_E}{N}\left(\delta_{E,E'}+\sum_{d=1}^{n}\binom{n}{d} p(E'|E,d)\right)
}
with $p(E'|E,d)$ given by Eq.~\cref{pcond} with $p_0,\,p_1$ depending on $d$ as given by Eqs.~\cref{eq:p11} and~\cref{p0p1}.
\end{lemma}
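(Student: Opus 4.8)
The plan is to compute $\braket{n_E n_{E'}}$ directly from its definition as a double sum over configurations, splitting the diagonal ($s = s'$) and off-diagonal ($s \ne s'$) contributions. Write $N^2 \braket{n_E n_{E'}} = \braket{\sum_{s,s'} \delta_{E,E_s}\delta_{E',E_{s'}}}$ where the average is over the random couplings $J_{ai}$. The $s = s'$ terms contribute $\braket{\sum_s \delta_{E,E_s}\delta_{E',E_s}} = \delta_{E,E'} \braket{N_E} = \delta_{E,E'}\, N p_E$, using $\braket{n_E} = p_E$ from the binomial calculation already established in the excerpt. For the $s \ne s'$ terms, the key observation is that by the permutation symmetry of random 3-SAT, the joint distribution of $(E_s, E_{s'})$ depends only on the Hamming distance $d = d(s,s')$ between $s$ and $s'$. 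Hence $\braket{\sum_{s \ne s'}\delta_{E,E_s}\delta_{E',E_{s'}}} = \sum_{d=1}^n (\text{number of pairs at distance } d)\cdot p(E_s = E, E_{s'} = E')$, and the number of ordered pairs at distance $d$ is $N\binom{n}{d}$. Factoring $p(E_s=E,E_{s'}=E') = p_E\, p(E'|E,d)$ gives, after dividing by $N^2$, exactly Eq.~\cref{eq:nene}.

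The substantive step is establishing the claimed form of $p(E'|E,d)$ — i.e.\ that conditioned on $E_s = E$ and on the Hamming distance being $d$, the value $E_{s'}$ is governed by the convolution structure in Eq.~\cref{pcond} with single-clause transition probabilities $p_0, p_1$ as in Eqs.~\cref{eq:p11}, \cref{p0p1}. To do this I would fix a clause $a$ with spin-triple $\cS_a$ drawn uniformly from the $\binom{n}{3}$ triples, and analyze the joint law of $(e_a(s), e_a(s'))$. The two assignments differ only on the $d$ flipped coordinates; the number $k$ of flipped coordinates lying inside $\cS_a$ is hypergeometric, and given $k$ one computes the conditional probabilities that $e_a(s) = 1$ together with $e_a(s') = 0$ or $1$ by a short enumeration of the $2^3$ literal patterns. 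Since the $m$ clauses are i.i.d., both $E_s = \sum_a e_a(s)$ and the pair $(E_s, E_{s'})$ are sums of $m$ i.i.d.\ contributions, so $p(E_s = E, E_{s'} = E')$ is an $m$-fold convolution; conditioning on $E_s = E$ and extracting the binomial factor $p_E$ isolates the stated conditional $p(E'|E,d)$, whose ``flip probabilities'' $p_0$ (resp.\ $p_1$) — the chance a satisfied clause becomes unsatisfied (resp.\ stays satisfied) — depend on $d$ through the hypergeometric overlap and are precisely Eqs.~\cref{eq:p11}, \cref{p0p1}.

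The main obstacle I anticipate is purely bookkeeping rather than conceptual: correctly carrying the hypergeometric weighting over the overlap size $k \in \{0,1,2,3\}$ between the flipped set and $\cS_a$, and matching the resulting single-clause joint probabilities to the precise normalization conventions used in Eqs.~\cref{eq:p11} and \cref{p0p1}. One must be careful that $p_0 + p_1$ equals the (unchanged) probability $1/8$ that the original clause is satisfied, and that the $d$-dependence enters only through the overlap distribution; the $d=0$ case must collapse to $p(E'|E,0) = \delta_{E,E'}$, recovering the diagonal term and serving as a consistency check. Once the single-clause joint law is pinned down, the passage to $m$ clauses is a routine independence-and-convolution argument, and assembling the diagonal and off-diagonal pieces yields Eq.~\cref{eq:nene}.
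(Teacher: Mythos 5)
Your proposal follows essentially the same route as the paper: the same diagonal/off-diagonal split of $\sum_{s,s'}\braket{\delta_{E,E_s}\delta_{E',E_{s'}}}$, the same reduction to a single-clause joint law depending only on the Hamming distance $d$ (your hypergeometric overlap $k$ reduces to the paper's $p_1=\binom{n-d}{3}/\binom{n}{3}$, since both clauses can only be simultaneously violated when $k=0$), and the same i.i.d.-clauses convolution yielding Eq.~\cref{pcond}. One small correction to your proposed sanity check: $p_0+p_1$ does not equal $1/8$ (they are conditionals on different events); the correct consistency relation is the law of total probability $\tfrac{1}{8}p_1+\tfrac{7}{8}p_0=\tfrac{1}{8}$, which indeed holds for $p_0=\tfrac{1}{7}(1-p_1)$.
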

\begin{proof}
Since $N_E=\sum_s \delta_{E_s,E}$ we find 
\bal{\label{NENE}
\braket{N_E N_{E'}} & = \sum_{s,s'} \braket{\delta_{E_s,E}\delta_{E_{s'},E'}}=\delta_{E,E'} \sum_s \braket{\delta_{E_s,E}} +\sum_{s,s'\ne s} \braket{\delta_{E_s,E}\delta_{E_{s'},E'}}\nonumber\\
& = p_EN \delta_{E,E'} +\sum_{s,s'\ne s} \braket{\delta_{E_s,E}\delta_{E_{s'},E'}}\,.
}
In order to evaluate the second term, we need to compute the joint probability $p(E(s)=E,\,E(s')=E')$. 
For given $s\ne s'$, 
\bal{
e_a(s)e_a(s')=\prod_{i\in \cS_a}\frac{1}{4}(1+J_{ai}s_i)(1+J_{ai}s_i')\,.\nonumber
}
$e_a(s)e_a(s')=1$ iff $s_i=s'_i$ for all $i\in \cS_a$, ie. when $s_a=s_a'$, with $s_a, s_a'$ the three bits of state $s,s'$ in clause $a$.
Thus, by randomizing over the $J_{ai}$,
\bal{
p(e_a(s)e_a(s')=1|s_a=s_a')
=\text{Prob}
\left(\prod_{i\in \cS_a}\frac{1}{2}(1+J_{ai}s_i) = 1\right)
=\frac{1}{8}\,.\nonumber
}
and zero for $s_a\ne s_a'$. 
Denote $d$ the number of bits that $s, s'$ differ. The probability that $s,s'$ are  identical on the three bits in $\cS_a$ is 
\bal{\label{eq:p11}
p_1:=p(s_a=s_a'|d)=\frac{\binom{n-d}{3}}{\binom{n}{3}}\qquad d=1,\ldots n\,.
}
Note that $p(s_a=s_a'|d)=0$ when $d\ge n-3$\,. 
Thus
\bal{\label{p0p1}
p(e_a(s)=1,e_a(s')=1|d)&=p(e_a(s)=e_a(s')=1|s_a=s_a')p(s_a=s_a'|d)=\frac{1}{8}p_1\nonumber\\
p(e_a(s)=0,e_a(s')=1|d)&=p(e_a(s')=1|d)-p(e_a(s)=1,e_a(s')=1|d)=\frac{1}{8}(1-p_1)\nonumber\\ p(e_a(s')=1|e_a(s)=1,d)&=\frac{p(e_a(s)=1,e_a(s')=1|d)}{p(e_a(s)=1)}=p_1\nonumber\\ p(e_a(s')=1|e_a(s)=0,d)&=\frac{p(e_a(s)=0,e_a(s')=1|d)}{p(e_a(s)=0)}=\frac{1}{7}(1-p_1)=p_0\,.
}
We write $p(E(s)=E,E(s')=E'|d)=p(E)p(E'|E,d)$ with $p(E)$ given by Eq.~\cref{binom}.  We condition the probability of the outcomes of $e_a(s')$ on the outcomes of $e_a(s)$. 
We sort the outcomes of $E_{a=1:m}(s)$  as 
\bal{
e_a(s)  &= \underbrace{1,\ldots\ldots\ldots\ldots,1}_{E}\underbrace{0,\ldots\ldots\ldots\ldots 0}_{m-E} \nonumber\\
e_a(s') &= \underbrace{1,\ldots,1}_{x}\underbrace{0,\ldots,0}_{E-x}\underbrace{1,\ldots,1}_{E'-x} \underbrace{0,\ldots,0}_{m-E-(E'-x)}\,.\nonumber
}
For $a=1:E$, $e_a(s')=1$ $x$ times and zero $E-x$ times. For $a=E(s)+1:m$, $e_a(s')=1$ $E'-x$ times and zero $m-E-(E'-x)$ times. $x$ can range as $0\le x\le E$ and $0\le E'-x\le m-E$.  The latter expression is equivalent to $E+E'-m\le x\le E'$. Therefore, $\max(E+E'-m,0)\le  x\le \min(E,E')$.  Define the binomial distribution $B(k|n,p)=\binom{n}{k}p^k(1-p)^{n-k}$. Then
\bal{\label{pcond}
p(E'|E,d) &= \sum_{x=\max(E+E'-m,0)}^{\min(E,E')}B(x|E,p_1)B(E'-x|m-E,p_0)\,.
}
Finally, define
\bal{
\sum_{s,s'\ne s} \braket{\delta_{E_s,E}\delta_{E_{s'},E'}} &= \sum_{s,s'\ne s} p(E_s=E)p(E_{s'}=E'|E_s=E)\nonumber\\
&= Np(E)\sum_{d=1}^{n}\binom{n}{d} p(E'|E,d)\,.\nonumber
}
Substitution in Eq.~\cref{NENE} gives the desired result. 
\end{proof}

We compare the theoretical estimate of $\Sigma_{EE'}$ with a numerical estimate in Fig.~\ref{file15} for $n=15$, showing excellent agreement.
\begin{figure}[H]
\begin{center}
\includegraphics[width=0.6\textwidth]{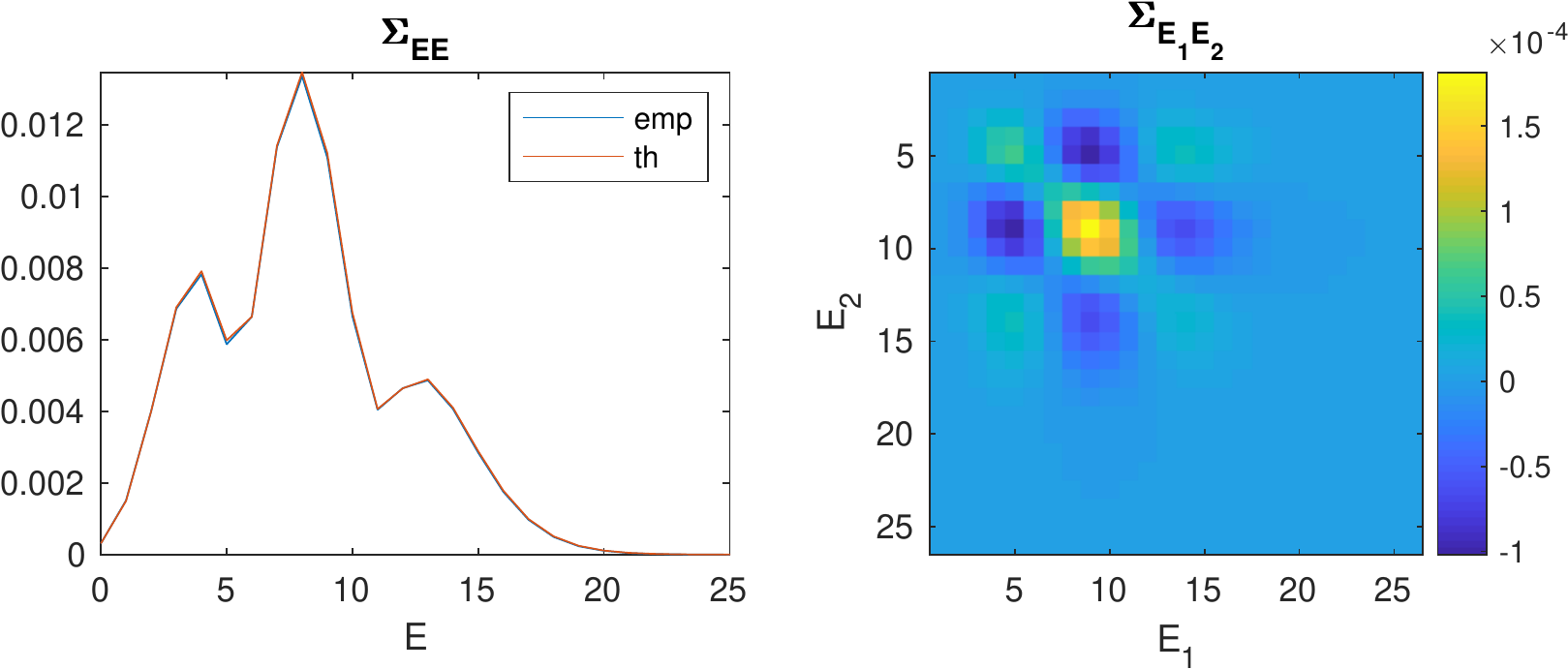}
\end{center}
\caption{Comparison of analytical estimate of 
$\Sigma_{EE'}^{(th)}= \braket{n_E n_{E'}} - p_E p_{E'}$ using Eqs.~\ref{binom} and~\cref{eq:nene} with numerical estimates $\Sigma_{EE'}^{(num)}$ using $10000$ random 3-SAT instances with $n=15, \alpha=4.2$.
{\bf Left}: Diagonal $\Sigma_{EE}$ versus $E$.
{\bf Right}: $\Sigma_{EE'}$ versus $E,E'$.
Maximal absolute error $\max_{E,E'}\left|\Sigma_{EE'}^{(th)}-\Sigma_{EE'}^{(num)}\right|=\num{2.9e-6}$.}
\label{file15}
\end{figure}
With Lemma~\ref{lemma2} we numerically compute the variance in $Z_1$ and $Z_2$
\baln{
\text{Var}( Z_1 ) &= \sum_{E,E'=1}^m\frac{\braket{n_En_{E'}}}{EE'}-\braket{Z_1}^2\,,
\qquad \text{Var}( Z_2 )= \sum_{E,E'=1}^m\frac{\braket{n_En_{E'}}}{E^2E'^2}-\braket{Z_2}^2\,.
}
We find numerically that $\sqrt{\text{Var}(Z_1)}$ scales as $\cO\left(n^{-5/2}\right)$ and $\sqrt{\text{Var}(Z_2)}$ scales as $\cO\left(n^{-7/2}\right)$ (see Fig.~\ref{file16b}).
\begin{figure}[H]
\begin{center}
\includegraphics[width=0.5\textwidth]{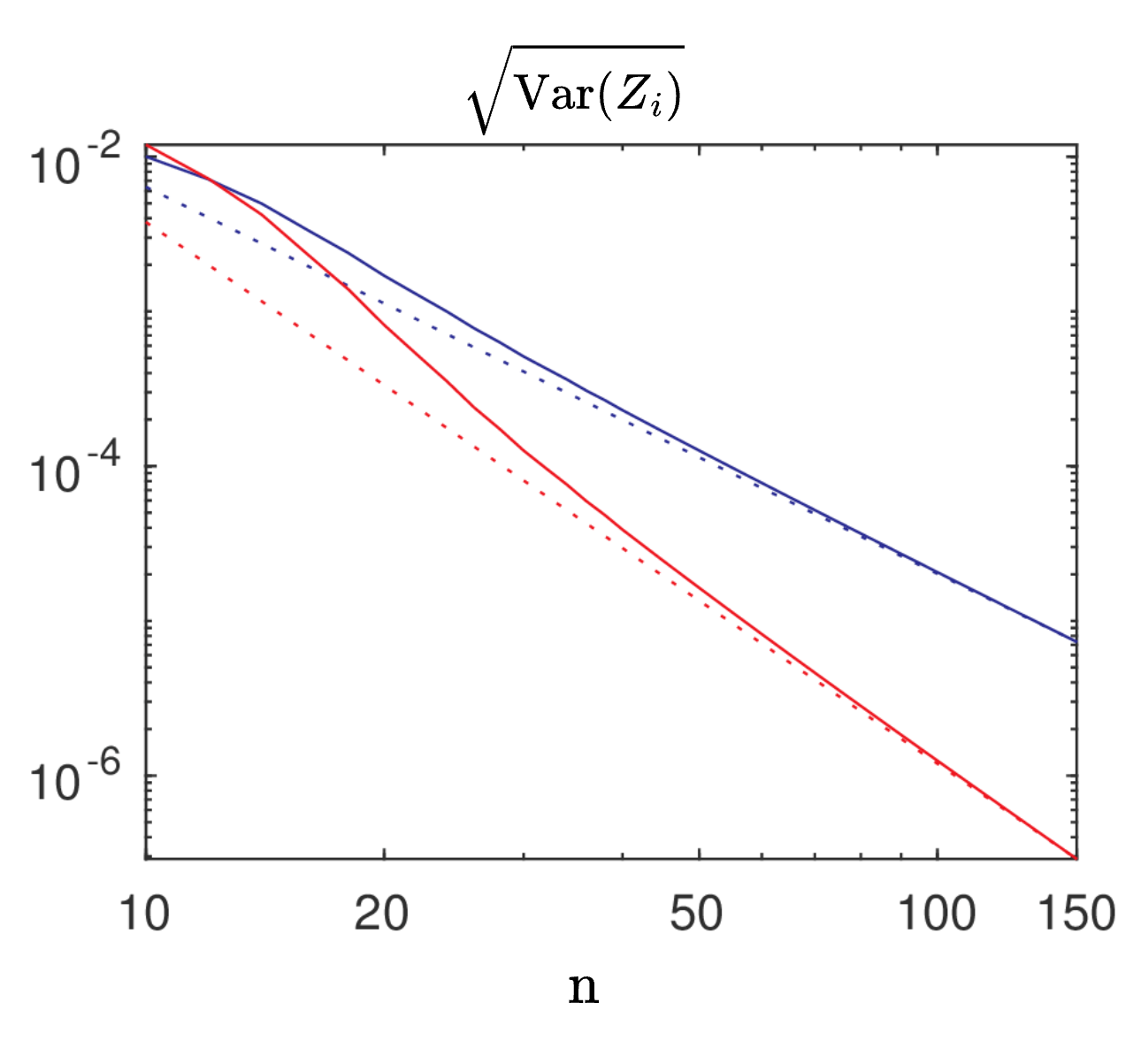}
\end{center}
\caption{Scaling of $\sqrt{\text{Var}(Z_1)}$ (blue) and and $\sqrt{\text{Var}(Z_2)}$ (red) versus $n$ for 3-SAT problem with  $\alpha=4.2$. Dotted lines are $n^{-5/2}$ (blue) and $n^{-7/2}$ (red).}
\label{file16b}
\end{figure}

\bibliography{references}

\begin{thebibliography}{10}
\expandafter\ifx\csname url\endcsname\relax
  \def\url#1{\texttt{#1}}\fi
\expandafter\ifx\csname urlprefix\endcsname\relax\def\urlprefix{URL }\fi
\providecommand{\bibinfo}[2]{#2}
\providecommand{\eprint}[2][]{\url{#2}}

\bibitem{gar79}
\bibinfo{author}{Garey, M.} \& \bibinfo{author}{Johnson, D.}
\newblock \emph{\bibinfo{title}{Computers and intractability: a guide to the
  theory of {NP}-completeness}} (\bibinfo{publisher}{Freeman},
  \bibinfo{address}{San Francisco}, \bibinfo{year}{1979}).

\bibitem{kir83}
\bibinfo{author}{Kirkpatrick, S.}, \bibinfo{author}{Gelatt, C.} \&
  \bibinfo{author}{Vecchi, M.}
\newblock \bibinfo{title}{Optimization by simulated annealing}.
\newblock \emph{\bibinfo{journal}{Science}} \textbf{\bibinfo{volume}{220}},
  \bibinfo{pages}{671--680} (\bibinfo{year}{1983}).

\bibitem{haj88}
\bibinfo{author}{Hajek, B.}
\newblock \bibinfo{title}{Cooling schedules for optimal annealing}.
\newblock \emph{\bibinfo{journal}{Mathematics of Operations Research}}
  \textbf{\bibinfo{volume}{13}}, \bibinfo{pages}{311--329}
  (\bibinfo{year}{1988}).

\bibitem{brooke1999quantum}
\bibinfo{author}{Brooke, J.}, \bibinfo{author}{Bitko, D.},
  \bibinfo{author}{Rosenbaum} \& \bibinfo{author}{Aeppli, G.}
\newblock \bibinfo{title}{Quantum annealing of a disordered magnet}.
\newblock \emph{\bibinfo{journal}{Science}} \textbf{\bibinfo{volume}{284}},
  \bibinfo{pages}{779--781} (\bibinfo{year}{1999}).

\bibitem{kadowaki1998quantum}
\bibinfo{author}{Kadowaki, T.} \& \bibinfo{author}{Nishimori, H.}
\newblock \bibinfo{title}{Quantum annealing in the transverse ising model}.
\newblock \emph{\bibinfo{journal}{Phys. Rev. E}} \textbf{\bibinfo{volume}{58}},
  \bibinfo{pages}{5355--5363} (\bibinfo{year}{1998}).

\bibitem{farhi2000quantum}
\bibinfo{author}{Farhi, E.}, \bibinfo{author}{Goldstone, J.},
  \bibinfo{author}{Gutmann, S.} \& \bibinfo{author}{Sipser, M.}
\newblock \bibinfo{title}{Quantum computation by adiabatic evolution}.
\newblock \emph{\bibinfo{journal}{arXiv preprint quant-ph/0001106}}
  (\bibinfo{year}{2000}).

\bibitem{apolloni1989quantum}
\bibinfo{author}{Apolloni, B.}, \bibinfo{author}{Carvalho, C.} \&
  \bibinfo{author}{De~Falco, D.}
\newblock \bibinfo{title}{Quantum stochastic optimization}.
\newblock \emph{\bibinfo{journal}{Stochastic Processes and their Applications}}
  \textbf{\bibinfo{volume}{33}}, \bibinfo{pages}{233--244}
  (\bibinfo{year}{1989}).

\bibitem{apolloni1990numerical}
\bibinfo{author}{Apolloni, B.}, \bibinfo{author}{Cesa-Bianchi, N.} \&
  \bibinfo{author}{De~Falco, D.}
\newblock \bibinfo{title}{A numerical implementation of “quantum
  annealing”}.
\newblock In \emph{\bibinfo{booktitle}{Stochastic Processes, Physics and
  Geometry: Proceedings of the Ascona-Locarno Conference}},
  \bibinfo{pages}{97--111} (\bibinfo{year}{1990}).

\bibitem{grover1997quantum}
\bibinfo{author}{Grover, L.~K.}
\newblock \bibinfo{title}{Quantum mechanics helps in searching for a needle in
  a haystack}.
\newblock \emph{\bibinfo{journal}{Physical review letters}}
  \textbf{\bibinfo{volume}{79}}, \bibinfo{pages}{325} (\bibinfo{year}{1997}).

\bibitem{roland2002quantum}
\bibinfo{author}{Roland, J.} \& \bibinfo{author}{Cerf, N.~J.}
\newblock \bibinfo{title}{Quantum search by local adiabatic evolution}.
\newblock \emph{\bibinfo{journal}{Physical Review A}}
  \textbf{\bibinfo{volume}{65}}, \bibinfo{pages}{042308}
  (\bibinfo{year}{2002}).

\bibitem{young2008size}
\bibinfo{author}{Young, A.~P.}, \bibinfo{author}{Knysh, S.} \&
  \bibinfo{author}{Smelyanskiy, V.~N.}
\newblock \bibinfo{title}{Size dependence of the minimum excitation gap in the
  quantum adiabatic algorithm}.
\newblock \emph{\bibinfo{journal}{Physical review letters}}
  \textbf{\bibinfo{volume}{101}}, \bibinfo{pages}{170503}
  (\bibinfo{year}{2008}).

\bibitem{farhi2001quantum}
\bibinfo{author}{Farhi, E.} \emph{et~al.}
\newblock \bibinfo{title}{A quantum adiabatic evolution algorithm applied to
  random instances of an np-complete problem}.
\newblock \emph{\bibinfo{journal}{Science}} \textbf{\bibinfo{volume}{292}},
  \bibinfo{pages}{472--475} (\bibinfo{year}{2001}).

\bibitem{hogg2003adiabatic}
\bibinfo{author}{Hogg, T.}
\newblock \bibinfo{title}{Adiabatic quantum computing for random satisfiability
  problems}.
\newblock \emph{\bibinfo{journal}{Physical Review A}}
  \textbf{\bibinfo{volume}{67}}, \bibinfo{pages}{022314}
  (\bibinfo{year}{2003}).

\bibitem{schutzhold2006adiabatic}
\bibinfo{author}{Sch{\"u}tzhold, R.} \& \bibinfo{author}{Schaller, G.}
\newblock \bibinfo{title}{Adiabatic quantum algorithms as quantum phase
  transitions: First versus second order}.
\newblock \emph{\bibinfo{journal}{Physical Review A}}
  \textbf{\bibinfo{volume}{74}}, \bibinfo{pages}{060304}
  (\bibinfo{year}{2006}).

\bibitem{young2010first}
\bibinfo{author}{Young, A.}, \bibinfo{author}{Knysh, S.} \&
  \bibinfo{author}{Smelyanskiy, V.}
\newblock \bibinfo{title}{First-order phase transition in the quantum adiabatic
  algorithm}.
\newblock \emph{\bibinfo{journal}{Physical review letters}}
  \textbf{\bibinfo{volume}{104}}, \bibinfo{pages}{020502}
  (\bibinfo{year}{2010}).

\bibitem{martovnak2002quantum}
\bibinfo{author}{Marto{\v{n}}{\'a}k, R.}, \bibinfo{author}{Santoro, G.~E.} \&
  \bibinfo{author}{Tosatti, E.}
\newblock \bibinfo{title}{Quantum annealing by the path-integral monte carlo
  method: The two-dimensional random ising model}.
\newblock \emph{\bibinfo{journal}{Physical Review B}}
  \textbf{\bibinfo{volume}{66}}, \bibinfo{pages}{094203}
  (\bibinfo{year}{2002}).

\bibitem{santoro2002theory}
\bibinfo{author}{Santoro, G.~E.}, \bibinfo{author}{Marton{\'a}k, R.},
  \bibinfo{author}{Tosatti, E.} \& \bibinfo{author}{Car, R.}
\newblock \bibinfo{title}{Theory of quantum annealing of an ising spin glass}.
\newblock \emph{\bibinfo{journal}{Science}} \textbf{\bibinfo{volume}{295}},
  \bibinfo{pages}{2427--2430} (\bibinfo{year}{2002}).

\bibitem{battaglia2005optimization}
\bibinfo{author}{Battaglia, D.~A.}, \bibinfo{author}{Santoro, G.~E.} \&
  \bibinfo{author}{Tosatti, E.}
\newblock \bibinfo{title}{Optimization by quantum annealing: Lessons from hard
  satisfiability problems}.
\newblock \emph{\bibinfo{journal}{Physical Review E}}
  \textbf{\bibinfo{volume}{71}}, \bibinfo{pages}{066707}
  (\bibinfo{year}{2005}).

\bibitem{heim2015quantum}
\bibinfo{author}{Heim, B.}, \bibinfo{author}{R{\o}nnow, T.~F.},
  \bibinfo{author}{Isakov, S.~V.} \& \bibinfo{author}{Troyer, M.}
\newblock \bibinfo{title}{Quantum versus classical annealing of ising spin
  glasses}.
\newblock \emph{\bibinfo{journal}{Science}} \textbf{\bibinfo{volume}{348}},
  \bibinfo{pages}{215--217} (\bibinfo{year}{2015}).

\bibitem{jorg2008simple}
\bibinfo{author}{J{\"o}rg, T.}, \bibinfo{author}{Krzakala, F.},
  \bibinfo{author}{Kurchan, J.} \& \bibinfo{author}{Maggs, A.}
\newblock \bibinfo{title}{Simple glass models and their quantum annealing}.
\newblock \emph{\bibinfo{journal}{Physical review letters}}
  \textbf{\bibinfo{volume}{101}}, \bibinfo{pages}{147204}
  (\bibinfo{year}{2008}).

\bibitem{altshuler2009adiabatic}
\bibinfo{author}{Altshuler, B.}, \bibinfo{author}{Krovi, H.} \&
  \bibinfo{author}{Roland, J.}
\newblock \bibinfo{title}{Adiabatic quantum optimization fails for random
  instances of np-complete problems}.
\newblock \emph{\bibinfo{journal}{arXiv preprint arXiv:0908.2782}}
  (\bibinfo{year}{2009}).

\bibitem{farhi2002quantum}
\bibinfo{author}{Farhi, E.}, \bibinfo{author}{Goldstone, J.} \&
  \bibinfo{author}{Gutmann, S.}
\newblock \bibinfo{title}{Quantum adiabatic evolution algorithms versus
  simulated annealing}.
\newblock \emph{\bibinfo{journal}{arXiv preprint quant-ph/0201031}}
  (\bibinfo{year}{2002}).

\bibitem{kong2017performance}
\bibinfo{author}{Kong, L.} \& \bibinfo{author}{Crosson, E.}
\newblock \bibinfo{title}{The performance of the quantum adiabatic algorithm on
  spike hamiltonians}.
\newblock \emph{\bibinfo{journal}{International Journal of Quantum
  Information}} \textbf{\bibinfo{volume}{15}}, \bibinfo{pages}{1750011}
  (\bibinfo{year}{2017}).

\bibitem{muthukrishnan2016tunneling}
\bibinfo{author}{Muthukrishnan, S.}, \bibinfo{author}{Albash, T.} \&
  \bibinfo{author}{Lidar, D.~A.}
\newblock \bibinfo{title}{Tunneling and speedup in quantum optimization for
  permutation-symmetric problems}.
\newblock \emph{\bibinfo{journal}{Physical Review X}}
  \textbf{\bibinfo{volume}{6}}, \bibinfo{pages}{031010} (\bibinfo{year}{2016}).

\bibitem{crosson2016simulated}
\bibinfo{author}{Crosson, E.} \& \bibinfo{author}{Harrow, A.~W.}
\newblock \bibinfo{title}{Simulated quantum annealing can be exponentially
  faster than classical simulated annealing}.
\newblock In \emph{\bibinfo{booktitle}{2016 IEEE 57th Annual Symposium on
  Foundations of Computer Science (FOCS)}}, \bibinfo{pages}{714--723}
  (\bibinfo{organization}{IEEE}, \bibinfo{year}{2016}).

\bibitem{neuhaus2011classical}
\bibinfo{author}{Neuhaus, T.}, \bibinfo{author}{Peschina, M.},
  \bibinfo{author}{Michielsen, K.} \& \bibinfo{author}{De~Raedt, H.}
\newblock \bibinfo{title}{Classical and quantum annealing in the median of
  three satisfiability}.
\newblock In \emph{\bibinfo{booktitle}{International Conference on Quantum
  Information}}, \bibinfo{pages}{QMI23} (\bibinfo{organization}{Optica
  Publishing Group}, \bibinfo{year}{2011}).

\bibitem{liu2015quantum}
\bibinfo{author}{Liu, C.-W.}, \bibinfo{author}{Polkovnikov, A.} \&
  \bibinfo{author}{Sandvik, A.~W.}
\newblock \bibinfo{title}{Quantum versus classical annealing: Insights from
  scaling theory and results for spin glasses on 3-regular graphs}.
\newblock \emph{\bibinfo{journal}{Physical Review Letters}}
  \textbf{\bibinfo{volume}{114}}, \bibinfo{pages}{147203}
  (\bibinfo{year}{2015}).

\bibitem{BIAN2020104609}
\bibinfo{author}{Bian, Z.} \emph{et~al.}
\newblock \bibinfo{title}{Solving sat (and maxsat) with a quantum annealer:
  Foundations, encodings, and preliminary results}.
\newblock \emph{\bibinfo{journal}{Information and Computation}}
  \textbf{\bibinfo{volume}{275}}, \bibinfo{pages}{104609}
  (\bibinfo{year}{2020}).

\bibitem{kowalsky20223}
\bibinfo{author}{Kowalsky, M.}, \bibinfo{author}{Albash, T.},
  \bibinfo{author}{Hen, I.} \& \bibinfo{author}{Lidar, D.~A.}
\newblock \bibinfo{title}{3-regular three-xorsat planted solutions benchmark of
  classical and quantum heuristic optimizers}.
\newblock \emph{\bibinfo{journal}{Quantum Science and Technology}}
  \textbf{\bibinfo{volume}{7}}, \bibinfo{pages}{025008} (\bibinfo{year}{2022}).

\bibitem{ronnow2014defining}
\bibinfo{author}{R{\o}nnow, T.~F.} \emph{et~al.}
\newblock \bibinfo{title}{Defining and detecting quantum speedup}.
\newblock \emph{\bibinfo{journal}{science}} \textbf{\bibinfo{volume}{345}},
  \bibinfo{pages}{420--424} (\bibinfo{year}{2014}).

\bibitem{perdomo2011study}
\bibinfo{author}{Perdomo-Ortiz, A.}, \bibinfo{author}{Venegas-Andraca, S.~E.}
  \& \bibinfo{author}{Aspuru-Guzik, A.}
\newblock \bibinfo{title}{A study of heuristic guesses for adiabatic quantum
  computation}.
\newblock \emph{\bibinfo{journal}{Quantum Information Processing}}
  \textbf{\bibinfo{volume}{10}}, \bibinfo{pages}{33--52}
  (\bibinfo{year}{2011}).

\bibitem{farhi2008make}
\bibinfo{author}{Farhi, E.}, \bibinfo{author}{Goldstone, J.},
  \bibinfo{author}{Gutmann, S.} \& \bibinfo{author}{Nagaj, D.}
\newblock \bibinfo{title}{How to make the quantum adiabatic algorithm fail}.
\newblock \emph{\bibinfo{journal}{International Journal of Quantum
  Information}} \textbf{\bibinfo{volume}{6}}, \bibinfo{pages}{503--516}
  (\bibinfo{year}{2008}).

\bibitem{grass2019quantum}
\bibinfo{author}{Gra{\ss}, T.}
\newblock \bibinfo{title}{Quantum annealing with longitudinal bias fields}.
\newblock \emph{\bibinfo{journal}{Physical review letters}}
  \textbf{\bibinfo{volume}{123}}, \bibinfo{pages}{120501}
  (\bibinfo{year}{2019}).

\bibitem{amin2008effect}
\bibinfo{author}{Amin, M.}
\newblock \bibinfo{title}{Effect of local minima on adiabatic quantum
  optimization}.
\newblock \emph{\bibinfo{journal}{Physical review letters}}
  \textbf{\bibinfo{volume}{100}}, \bibinfo{pages}{130503}
  (\bibinfo{year}{2008}).

\bibitem{Rezakhani2009}
\bibinfo{author}{Rezakhani, A.~T.}, \bibinfo{author}{Kuo, W.-J.},
  \bibinfo{author}{Hamma, A.}, \bibinfo{author}{Lidar, D.~A.} \&
  \bibinfo{author}{Zanardi, P.}
\newblock \bibinfo{title}{Quantum adiabatic brachistochrone}.
\newblock \emph{\bibinfo{journal}{Phys. Rev. Lett.}}
  \textbf{\bibinfo{volume}{103}}, \bibinfo{pages}{080502}
  (\bibinfo{year}{2009}).

\bibitem{Farhi2002a}
\bibinfo{author}{Farhi, E.}, \bibinfo{author}{Goldstone, J.} \&
  \bibinfo{author}{Gutmann, S.}
\newblock \bibinfo{title}{Quantum adiabatic evolution algorithms with different
  paths}.
\newblock \emph{\bibinfo{journal}{arXiv: Quantum Physics}}
  (\bibinfo{year}{2002}).

\bibitem{yan2022analytical}
\bibinfo{author}{Yan, B.} \& \bibinfo{author}{Sinitsyn, N.~A.}
\newblock \bibinfo{title}{Analytical solution for nonadiabatic quantum
  annealing to arbitrary ising spin hamiltonian}.
\newblock \emph{\bibinfo{journal}{Nature Communications}}
  \textbf{\bibinfo{volume}{13}}, \bibinfo{pages}{1--12} (\bibinfo{year}{2022}).

\bibitem{crosson2014different}
\bibinfo{author}{Crosson, E.}, \bibinfo{author}{Farhi, E.},
  \bibinfo{author}{Lin, C. Y.-Y.}, \bibinfo{author}{Lin, H.-H.} \&
  \bibinfo{author}{Shor, P.}
\newblock \bibinfo{title}{Different strategies for optimization using the
  quantum adiabatic algorithm}.
\newblock \emph{\bibinfo{journal}{arXiv preprint arXiv:1401.7320}}
  (\bibinfo{year}{2014}).

\bibitem{crosson2021prospects}
\bibinfo{author}{Crosson, E.} \& \bibinfo{author}{Lidar, D.}
\newblock \bibinfo{title}{Prospects for quantum enhancement with diabatic
  quantum annealing}.
\newblock \emph{\bibinfo{journal}{Nature Reviews Physics}}
  \textbf{\bibinfo{volume}{3}}, \bibinfo{pages}{466--489}
  (\bibinfo{year}{2021}).

\bibitem{shi2020efficient}
\bibinfo{author}{Shi, A.}, \bibinfo{author}{Guan, H.} \&
  \bibinfo{author}{Zhang, W.}
\newblock \bibinfo{title}{Efficient diabatic quantum algorithm in number
  factorization}.
\newblock \emph{\bibinfo{journal}{Physics Letters A}}
  \textbf{\bibinfo{volume}{384}}, \bibinfo{pages}{126745}
  (\bibinfo{year}{2020}).

\bibitem{childs2003exponential}
\bibinfo{author}{Childs, A.~M.} \emph{et~al.}
\newblock \bibinfo{title}{Exponential algorithmic speedup by a quantum walk}.
\newblock In \emph{\bibinfo{booktitle}{Proceedings of the thirty-fifth annual
  ACM symposium on Theory of computing}}, \bibinfo{pages}{59--68}
  (\bibinfo{year}{2003}).

\bibitem{albash2018adiabatic}
\bibinfo{author}{Albash, T.} \& \bibinfo{author}{Lidar, D.~A.}
\newblock \bibinfo{title}{Adiabatic quantum computation}.
\newblock \emph{\bibinfo{journal}{Reviews of Modern Physics}}
  \textbf{\bibinfo{volume}{90}}, \bibinfo{pages}{015002}
  (\bibinfo{year}{2018}).

\bibitem{hastings2021}
\bibinfo{author}{Hastings, M.~B.}
\newblock \bibinfo{title}{The power of adiabatic quantum computation with no
  sign problem}.
\newblock \emph{\bibinfo{journal}{Quantum}} \textbf{\bibinfo{volume}{5}},
  \bibinfo{pages}{597} (\bibinfo{year}{2021}).

\bibitem{vazirani2021sub}
\bibinfo{author}{Gily{\'e}n, A.}, \bibinfo{author}{Hastings, M.~B.} \&
  \bibinfo{author}{Vazirani, U.}
\newblock \bibinfo{title}{(sub) exponential advantage of adiabatic quantum
  computation with no sign problem}.
\newblock In \emph{\bibinfo{booktitle}{Proceedings of the 53rd Annual ACM
  SIGACT Symposium on Theory of Computing}}, \bibinfo{pages}{1357--1369}
  (\bibinfo{year}{2021}).

\bibitem{vznidarivc2006exponential}
\bibinfo{author}{{\v{Z}}nidari{\v{c}}, M.} \& \bibinfo{author}{Horvat, M.}
\newblock \bibinfo{title}{Exponential complexity of an adiabatic algorithm for
  an np-complete problem}.
\newblock \emph{\bibinfo{journal}{Physical Review A}}
  \textbf{\bibinfo{volume}{73}}, \bibinfo{pages}{022329}
  (\bibinfo{year}{2006}).

\bibitem{slutskii2019analog}
\bibinfo{author}{Slutskii, M.}, \bibinfo{author}{Albash, T.},
  \bibinfo{author}{Barash, L.} \& \bibinfo{author}{Hen, I.}
\newblock \bibinfo{title}{Analog nature of quantum adiabatic unstructured
  search}.
\newblock \emph{\bibinfo{journal}{New Journal of Physics}}
  \textbf{\bibinfo{volume}{21}}, \bibinfo{pages}{113025}
  (\bibinfo{year}{2019}).

\bibitem{Born1928}
\bibinfo{author}{{Born}, M.} \& \bibinfo{author}{{Fock}, V.}
\newblock \bibinfo{title}{{Beweis des Adiabatensatzes}}.
\newblock \emph{\bibinfo{journal}{Zeitschrift fur Physik}}
  \textbf{\bibinfo{volume}{51}}, \bibinfo{pages}{165--180}
  (\bibinfo{year}{1928}).

\bibitem{Kato1950}
\bibinfo{author}{Kato, T.}
\newblock \bibinfo{title}{On the adiabatic theorem of quantum mechanics}.
\newblock \emph{\bibinfo{journal}{Journal of the Physical Society of Japan}}
  \textbf{\bibinfo{volume}{5}}, \bibinfo{pages}{435--439}
  (\bibinfo{year}{1950}).

\bibitem{Jansen2007}
\bibinfo{author}{Jansen, S.}, \bibinfo{author}{Ruskai, M.~B.} \&
  \bibinfo{author}{Seiler, R.}
\newblock \bibinfo{title}{Bounds for the adiabatic approximation with
  applications to quantum computation}.
\newblock \emph{\bibinfo{journal}{Journal of Mathematical Physics}}
  \textbf{\bibinfo{volume}{48}}, \bibinfo{pages}{102111--102111}
  (\bibinfo{year}{2007}).

\bibitem{Mezard2002}
\bibinfo{author}{Mézard, M.}, \bibinfo{author}{Parisi, G.} \&
  \bibinfo{author}{Zecchina, R.}
\newblock \bibinfo{title}{Analytic and algorithmic solution of random
  satisfiability problems}.
\newblock \emph{\bibinfo{journal}{Science}} \textbf{\bibinfo{volume}{297}},
  \bibinfo{pages}{812--815} (\bibinfo{year}{2002}).
\newblock \eprint{https://www.science.org/doi/pdf/10.1126/science.1073287}.

\bibitem{MacKay2003}
\bibinfo{author}{MacKay, D. J.~C.}
\newblock \emph{\bibinfo{title}{Information Theory, Inference, and Learning
  Algorithms}} (\bibinfo{publisher}{Copyright Cambridge University Press},
  \bibinfo{year}{2003}).

\bibitem{arora2009computational}
\bibinfo{author}{Arora, S.} \& \bibinfo{author}{Barak, B.}
\newblock \emph{\bibinfo{title}{Computational complexity: a modern approach}}
  (\bibinfo{publisher}{Cambridge University Press}, \bibinfo{year}{2009}).

\bibitem{nielsen2010quantum}
\bibinfo{author}{Nielsen, M.~A.} \& \bibinfo{author}{Chuang, I.~L.}
\newblock \emph{\bibinfo{title}{Quantum Computation and Quantum Information}}
  (\bibinfo{publisher}{Cambridge University Press}, \bibinfo{year}{2010}).

\bibitem{brassard1997}
\bibinfo{author}{Brassard, G.}
\newblock \bibinfo{title}{Searching a quantum phone book}.
\newblock \emph{\bibinfo{journal}{Science}} \textbf{\bibinfo{volume}{275}},
  \bibinfo{pages}{627--628} (\bibinfo{year}{1997}).

\bibitem{Boyer1998}
\bibinfo{author}{Boyer, M.}, \bibinfo{author}{Brassard, G.},
  \bibinfo{author}{H{\O}yer, P.} \& \bibinfo{author}{Tapp, A.}
\newblock \bibinfo{title}{{Tight bounds on quantum searching}}.
\newblock \emph{\bibinfo{journal}{Fortschritte der Physik}}
  \textbf{\bibinfo{volume}{46}}, \bibinfo{pages}{493--505}
  (\bibinfo{year}{1998}).
\newblock \eprint{9605034}.

\bibitem{Chuang2014}
\bibinfo{author}{Yoder, T.~J.}, \bibinfo{author}{Low, G.~H.} \&
  \bibinfo{author}{Chuang, I.~L.}
\newblock \bibinfo{title}{{Fixed-point quantum search with an optimal number of
  queries}}.
\newblock \emph{\bibinfo{journal}{Physical Review Letters}}
  \textbf{\bibinfo{volume}{113}}, \bibinfo{pages}{1--5} (\bibinfo{year}{2014}).
\newblock \eprint{1409.3305}.

\bibitem{Chuang2017}
\bibinfo{author}{Dalzell, A.~M.}, \bibinfo{author}{Yoder, T.~J.} \&
  \bibinfo{author}{Chuang, I.~L.}
\newblock \bibinfo{title}{{Fixed-point adiabatic quantum search}}.
\newblock \emph{\bibinfo{journal}{Physical Review A}}
  \textbf{\bibinfo{volume}{95}}, \bibinfo{pages}{1--16} (\bibinfo{year}{2017}).
\newblock \eprint{1609.03603}.

\bibitem{Vazirani2001}
\bibinfo{author}{van Dam, W.}, \bibinfo{author}{Mosca, M.} \&
  \bibinfo{author}{Vazirani, U.~V.}
\newblock \bibinfo{title}{How powerful is adiabatic quantum computation?}
\newblock \emph{\bibinfo{journal}{Proceedings 2001 IEEE International
  Conference on Cluster Computing}} \bibinfo{pages}{279--287}
  (\bibinfo{year}{2001}).

\bibitem{Hen2011}
\bibinfo{author}{Hen, I.} \& \bibinfo{author}{Young, A.~P.}
\newblock \bibinfo{title}{Exponential complexity of the quantum adiabatic
  algorithm for certain satisfiability problems}.
\newblock \emph{\bibinfo{journal}{Phys. Rev. E}} \textbf{\bibinfo{volume}{84}},
  \bibinfo{pages}{061152} (\bibinfo{year}{2011}).

\bibitem{Franco1983}
\bibinfo{author}{Franco, J.} \& \bibinfo{author}{Paull, M.}
\newblock \bibinfo{title}{Probabilistic analysis of the davis putman procedure
  for solving the satisfiability problem}.
\newblock \emph{\bibinfo{journal}{Discrete Applied Mathematics}}
  \textbf{\bibinfo{volume}{5}}, \bibinfo{pages}{77–87}
  (\bibinfo{year}{1983}).

\bibitem{Chvatal1988}
\bibinfo{author}{Chv\'{a}tal, V.} \& \bibinfo{author}{Szemer\'{e}di, E.}
\newblock \bibinfo{title}{Many hard examples for resolution}.
\newblock \emph{\bibinfo{journal}{J. ACM}} \textbf{\bibinfo{volume}{35}},
  \bibinfo{pages}{759–768} (\bibinfo{year}{1988}).

\bibitem{braunstein2002survey}
\bibinfo{author}{Braunstein, A.}, \bibinfo{author}{M{\'e}zard, M.} \&
  \bibinfo{author}{Zecchina, R.}
\newblock \bibinfo{title}{Survey propagation: an algorithm for satisfiability}.
\newblock \emph{\bibinfo{journal}{arXiv preprint cs/0212002}}
  (\bibinfo{year}{2002}).

\bibitem{mertens2006threshold}
\bibinfo{author}{Mertens, S.}, \bibinfo{author}{M{\'e}zard, M.} \&
  \bibinfo{author}{Zecchina, R.}
\newblock \bibinfo{title}{Threshold values of random k-sat from the cavity
  method}.
\newblock \emph{\bibinfo{journal}{Random Structures \& Algorithms}}
  \textbf{\bibinfo{volume}{28}}, \bibinfo{pages}{340--373}
  (\bibinfo{year}{2006}).

\bibitem{bapst2013quantum}
\bibinfo{author}{Bapst, V.}, \bibinfo{author}{Foini, L.},
  \bibinfo{author}{Krzakala, F.}, \bibinfo{author}{Semerjian, G.} \&
  \bibinfo{author}{Zamponi, F.}
\newblock \bibinfo{title}{The quantum adiabatic algorithm applied to random
  optimization problems: The quantum spin glass perspective}.
\newblock \emph{\bibinfo{journal}{Physics Reports}}
  \textbf{\bibinfo{volume}{523}}, \bibinfo{pages}{127--205}
  (\bibinfo{year}{2013}).

\bibitem{hen2019quantum}
\bibinfo{author}{Hen, I.}
\newblock \bibinfo{title}{How quantum is the speedup in adiabatic unstructured
  search?}
\newblock \emph{\bibinfo{journal}{Quantum Information Processing}}
  \textbf{\bibinfo{volume}{18}}, \bibinfo{pages}{1--11} (\bibinfo{year}{2019}).

\bibitem{Jordi2022}
\bibinfo{author}{Schiffer, B.~F.}, \bibinfo{author}{Tura, J.} \&
  \bibinfo{author}{Cirac, J.~I.}
\newblock \bibinfo{title}{Adiabatic spectroscopy and a variational quantum
  adiabatic algorithm}.
\newblock \emph{\bibinfo{journal}{PRX Quantum}} \textbf{\bibinfo{volume}{3}},
  \bibinfo{pages}{020347} (\bibinfo{year}{2022}).

\bibitem{aaronson2022structure}
\bibinfo{author}{Aaronson, S.}
\newblock \bibinfo{title}{How much structure is needed for huge quantum
  speedups?} (\bibinfo{year}{2022}).
\newblock \eprint{2209.06930}.

\bibitem{Zhou2007}
\bibinfo{author}{Ding, J.} \& \bibinfo{author}{Zhou, A.}
\newblock \bibinfo{title}{Eigenvalues of rank-one updated matrices with some
  applications}.
\newblock \emph{\bibinfo{journal}{Applied Mathematics Letters}}
  \textbf{\bibinfo{volume}{20}}, \bibinfo{pages}{1223--1226}
  (\bibinfo{year}{2007}).

\end{thebibliography}

\end{document}